\newcommand{\I}{\mathcal{I}}
\renewcommand{\leq}{\leqslant}
\renewcommand{\geq}{\geqslant}
\DeclareMathOperator*{\argmax}{arg\,max}
\DeclareMathOperator*{\argmin}{arg\,min}
\newcommand{\e}[2]{$(#1, #2)$}
\title{Generalizing Roberts' characterization of unit interval graphs}
\author{Virginia Ardévol Martínez}{Université Paris-Dauphine, PSL University, CNRS, LAMSADE, 75016 Paris, France}{virginia.ardevol-martinez@dauphine.psl.eu}{https://orcid.org/0000-0002-3703-2335}{}
\author{Romeo Rizzi}{Department of Computer Science, University of Verona, Italy}{romeo.rizzi@univr.it}{}{}
\author{Abdallah Saffidine}{University of New South Wales, Sydney, Australia}{abdallah.saffidine@gmail.com}{}{}
\author{Florian Sikora}{Université Paris-Dauphine, PSL University, CNRS, LAMSADE, 75016 Paris, France}{florian.sikora@dauphine.psl.eu}{https://orcid.org/0000-0003-2670-6258}{}
\author{Stéphane Vialette}{LIGM, CNRS, Univ Gustave Eiffel, F77454 Marne-la-Vallée, France}{stephane.vialette@univ-eiffel.fr}{https://orcid.org/0000-0003-2308-6970}{}
\authorrunning{V. Ardévol et al.}
\begin{document}


\keywords{Interval graphs, Multiple interval graphs, Unit interval graphs}

\acknowledgements{Part of this work was conducted when RR was an invited professor at Université Paris-Dauphine. This work was partially supported by the  ANR project ANR-21-CE48-0022 (``S-EX-AP-PE-AL'').}

\maketitle              
\begin{abstract}

For any natural number $d$, a graph $G$ is a (disjoint) $d$-interval graph if it is the intersection graph of (disjoint) $d$-intervals, the union of $d$ (disjoint) intervals on the real line. Two important subclasses of $d$-interval graphs are unit and balanced $d$-interval graphs (where every interval has unit length or all the intervals associated to a same vertex have the same length, respectively).
A celebrated result by Roberts gives a simple characterization of unit interval graphs being exactly claw-free interval graphs. Here, we study the generalization of this characterization for $d$-interval graphs.
In particular, we prove that for any $d \geq 2$, if $G$ is a $K_{1,2d+1}$-free interval graph, then $G$ is a unit $d$-interval graph. However, somehow surprisingly, under the same assumptions, $G$ is not always a \emph{disjoint} unit $d$-interval graph.
This implies that the class of disjoint unit $d$-interval graphs is strictly included in the class of unit $d$-interval graphs.
Finally, we study the relationships between the classes obtained under disjoint and non-disjoint $d$-intervals in the balanced case and show that the classes of disjoint balanced 2-intervals and balanced 2-intervals coincide, but this is no longer true for $d>2$.


\end{abstract}

\section{Introduction}

 Interval graphs are the intersection graphs of intervals on the real line: every vertex represents an interval and there is an edge between two vertices if and only if their corresponding intervals intersect. The class of interval graphs is one of the most important classes of intersection graphs, mostly due to their numerous applications in scheduling or allocation problems and in bioinformatics, see for examples these monographs~\cite{Fishburn1985,mckee1999topics,roberts1978graph}. 

Already in the late 70s, situations arising naturally in scheduling and allocation motivated the generalization of interval graphs to \emph{multiple interval graphs}, where every vertex is associated to the union of $d$ intervals on the real line (called a $d$-interval), for some natural number $d$, instead of to a single interval. This allowed a more robust modeling of problems such as multi-task scheduling or allocation of multiple associated linear resources~\cite{DBLP:journals/siammax/GriggsW80,mcguigan1977presentation,DBLP:journals/jgt/TrotterH79}, and led to several interesting problems~\cite{DBLP:journals/tcs/FellowsHRV09,DBLP:journals/algorithmica/Francis0O15,DBLP:journals/tcs/Jiang10,DBLP:journals/algorithmica/Jiang13,DBLP:journals/tcs/JiangZ12}. 
The applications of 2-interval graphs to bioinformatics also increased the interest on this class of graphs~\cite{joseph1992determining,vialette2004computational}.

These concrete applications of multiple interval graphs, specifically 2-interval graphs, suggested a focus on different restrictions, such as unit 2-interval graphs \cite{bar2006scheduling}, or balanced 2-interval graphs~\cite{DBLP:journals/tcs/CrochemoreHLRV08}.
For both interval and multiple interval graphs, we say that they are \emph{unit} if all the intervals in the representation, i.e. the set of intervals associated to the graph, have unit length. 
For multiple interval graphs, we also define the subclass of \emph{balanced} $d$-interval graphs, where all intervals forming the same $d$-interval have equal length, but intervals of different $d$-intervals can have different lengths. Finally, for both interval and multiple interval graphs, we say that they are \emph{proper} if there exists an interval representation where no interval properly contains another one.
The class of unit 2-interval graphs is known to be properly contained in the class of balanced 2-interval graphs~\cite{DBLP:conf/wg/GambetteV07}. 

Let us remark that in the literature, $d$-intervals have been defined both as the union of $d$ \emph{disjoint} intervals \cite{bar2006scheduling,butman2010optimization,DBLP:journals/dam/WestS84}, as the union of $d$ \emph{not necessarily disjoint} intervals \cite{DBLP:journals/jgt/TrotterH79}, and simply as the union of $d$ intervals, without specifying whether they are disjoint or not \cite{erdos1985note,scheinerman1983interval}. 
This ambiguity is not relevant in the general case, since both definitions lead to the same class of graphs. 
However, in this paper we focus on subclasses of multiple interval graphs, namely unit and balanced, for which this equivalence is not known to be true. 
Therefore, we will distinguish between the two possible definitions of $d$-intervals.
The first definition is denoted as \emph{disjoint $d$-intervals} while the second is simply denoted as \emph{$d$-intervals} (further details are discussed in Section~\ref{SectionDefinitions}). 
 
From an algorithmic perspective, another reason why interval graphs have been widely studied is because many problems that are \NP-hard become solvable in polynomial time when restricted to this class of graphs. This is not the case for $d$-interval graphs~\cite{bar2006scheduling,butman2010optimization,DBLP:journals/algorithmica/Francis0O15}. The problem of recognizing $d$-interval graphs is no exception: it is $\NP$-complete for every natural number $d\geq 2$ \cite{DBLP:journals/dam/WestS84}, even for unit 2-interval graphs~\cite{acceptedvirginia} and balanced 2-interval graphs~\cite{DBLP:conf/wg/GambetteV07long}. In sharp contrast, the recognition of interval graphs (both in the unit and unrestricted case) can be done in polynomial time~\cite{DBLP:journals/jcss/BoothL76,DBLP:journals/tcs/HabibMPV00}, and there exist multiple characterizations of them, including a characterization in terms of forbidden induced subgraphs~\cite{lekkeikerker1962representation,zbMATH03307330}. 
In particular, in 1969, Roberts proved that the class of proper interval graphs and the class of unit interval graphs coincide \cite{zbMATH03307330}, and showed that unit interval graphs are exactly $K_{1,3}$-free interval graphs (i.e., interval graphs that do not contain the star with three leaves as an induced subgraph). To do so, he used the Scott-Suppes characterization of semiorders (see \cite{DBLP:journals/dm/BogartW99,DBLP:journals/dm/Gardi07} for short constructive proofs of this result). This is a remarkable result as it gives a simple characterization of unit interval graphs. It also implies that if $G=(V,E)$ is a unit interval graph, then there exists a semiorder $S(V,P)$ on the vertices of $V$ such that $(u,w)\in P$ if and only if $(u,w) \notin E$, which justifies the original name of ``indifference graphs'' for unit interval graphs (as they can represent indifference relations by joining two elements by an edge if neither is preferred over the other one).

It is straight-forward to check that being $K_{1,3}$-free is a necessary condition for being a unit interval graph, as an interval of unit length cannot intersect three pairwise disjoint intervals of length one. The reader can observe that this necessary condition extends naturally to multiple interval graphs: a unit 2-interval graph cannot contain a $K_{1,5}$ as an induced subgraph; and more generally, a unit $d$-interval graph cannot contain a $K_{1,2d+1}$ as an induced subgraph. Thus the following natural question arises: can we generalize Roberts characterization of unit interval graphs to multiple interval graphs? 
Perhaps the most straight-forward generalization would be to characterize unit $d$-interval graphs as $K_{1,2d+1}$-free $d$-interval graphs, but this has already been proven false in \cite{alexandre}: there exists a graph which is 2-interval and $K_{1,5}$-free, but not unit 2-interval. But not all hope of generalizing Roberts characterization must be lost yet! What if we add some additional constraints? 

Already in 2016, Durán et al. decided to focus on $d$-interval graphs which are also \emph{interval}~\cite{Duran2016}. In a presentation at VII LAWCG, they claimed that if $G$ is an interval graph, then $G$ is a disjoint unit $d$-interval graph if and only if it is $K_{1,2d+1}$-free \footnote{Note that they refer to disjoint unit $d$-intervals simply as unit $d$-intervals, but they are explicitly defined beforehand as the union of $d$ disjoint intervals.}. 
In this paper, we show that the aforementioned statement is actually false, and that, perhaps surprisingly, Roberts characterization can only be generalized depending on the chosen definition of $d$-interval graphs! (See \cref{fig:recap classes} for a summary of the main results).

We also study the subclasses obtained under the two definitions of $d$-intervals in the balanced case, expanding the knowledge of the relationships between the different subclasses of 2-interval graphs. 

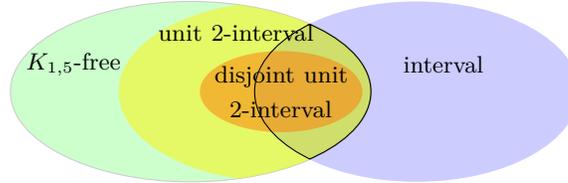
\begin{figure}[!ht]
\centering
\begin{tikzpicture}[scale=1.2]
    \draw[fill=green, opacity=0.2] (0,0) ellipse (2 and 1);
    
    \fill[fill=blue, opacity=0.2] (2.5,0) ellipse (1.8 and 1);
       
    \begin{scope}
       \fill[fill=red, opacity=0.5] (1,0) ellipse (0.9 and 0.45);
    \end{scope}
    
    \begin{scope}
        \clip (1,0) ellipse (1.8 and 1);

        \fill[fill=yellow, opacity=0.5] (0,0) ellipse (2 and 1);
    \end{scope}

    \begin{scope}
  \clip (0,0) ellipse (2 and 1);
  \clip (2.5,0) ellipse (1.8 and 1);
  \draw[thick, black] (2.5,0) ellipse (1.8 and 1);
\draw[thick, black] (0,0) ellipse (2 and 1);
\end{scope}

    \node at (-1.3,0.3) {\small $K_{1,5}$-free};
\node at (2.8,0.3) {\small interval};
\node[text width=2.5cm,align=center] at (1,0) {\small disjoint unit 2-interval};
\node[text width=2.5cm,align=center] at (0.5,0.65) { \small unit 2-interval};

\end{tikzpicture}

\caption{$K_{1,5}$-free interval graphs are not contained in the class of disjoint unit 2-interval graphs.
The class of unit 2-interval graphs is a superclass of disjoint unit 2-interval graphs, and spans the whole intersection of $K_{1,5}$-free and interval graphs.
\label{fig:recap classes}
}
\end{figure}
The structure of the paper is as follows: Section~\ref{SectionDefinitions} briefly introduces the necessary definitions and 
discusses the definition of $d$-interval. 
In Section~\ref{SectionAlgorithm}, we prove that if $G$ is an interval graph, then it is unit $d$-interval if and only if it is $K_{1,2d+1}$-free. We then show that this result cannot be generalized for \emph{disjoint} multiple intervals in Section~\ref{SectionGeneral}, which implies that the class of disjoint unit $d$-interval graphs is actually properly contained in the class of unit $d$-interval graphs.
Finally, we study the balanced case in Section~\ref{SectionBalanced}, and show that the definition of $d$-interval also matters, as the classes of disjoint balanced $2$-intervals and balanced $2$-intervals coincide, but this is no longer true for $d>2$. We conclude with some open questions in Section~\ref{SectionConclusion}.
Due to space constraints, some proofs, marked with a ($\star$), are deferred to the appendix.

\section{Preliminaries}\label{SectionDefinitions}

In the following, $G=(V,E)$ will denote a simple undirected graph on the set of vertices $V$ and with edges $E$, and an interval will be a set of real numbers of the form $[a,b] := \{x\in \mathbb{R} \mid a\leq x \leq b\}$.

A graph $G$ is an \emph{interval graph} if there exists a bijection from the vertices of $G$ to a multiset of intervals, $f:V\to \mathcal{I}$, such that there exists an edge between two vertices if and only if their corresponding intervals intersect. The multiset $\mathcal{I}$ is called an \emph{interval representation} of $G$.

For any natural number $d>0$, a (disjoint) \emph{$d$-interval} is the union of $d$ (disjoint) intervals on the real line. 
    
For any natural number $d>0$, a graph $G$ is a (disjoint) \emph{$d$-interval graph} if there exists a bijection from the vertices of $G$ to a multiset of (disjoint) $d$-intervals, $f:V\to \mathcal{I}$, such that there exists an edge between two vertices if and only if their corresponding $d$-intervals intersect. The multiset $\mathcal{I}$ of $d$-intervals is called a $d$-interval representation of $G$, and the family of all intervals that compose the $d$-intervals in $\mathcal{I}$ is called the \emph{underlying family of intervals} of $\mathcal{I}$.

A (disjoint) $d$-interval graph is \emph{unit} if there exists a (disjoint) $d$-interval representation where all the intervals of the underlying family have unit length, and it is \emph{proper} if there exists a representation where no interval of the underlying family is properly contained in another one.
A (disjoint) $d$-interval graph is \emph{balanced} if there exists a (disjoint) $d$-interval representation where the $d$ intervals of a same $d$-interval have the same length, but intervals of different $d$-intervals can differ in length. 

The graph $K_{1,t}$ is the star with $t$ leaves (also referred to as \emph{$t$-claw} in the following).
For any $t \geq 3$, if the set of vertices $\{v_0, v_1, \ldots, v_t\}$ induces a $K_{1,t}$ with center $v_0$, we will denote it by $[v_0; v_1, \ldots , v_t]$. We say that a graph is $K_{1,t}$-free if it does not contain any induced $K_{1,t}$'s.
Furthermore, we say that an induced $t$-claw $K_{1,t}$ is \emph{maximal} if it is not contained in an induced $K_{1,m}$ with $m>t$.

\subparagraph*{Discussion on the definition of $d$-intervals.}

As mentioned in the introduction, $d$-interval graphs have been defined in the literature both as the union of $d$ disjoint intervals and as the union of $d$ not necessarily disjoint intervals. 
This might be related to the fact that when there are no length restrictions on the intervals, both definitions lead to the same class of graphs (as one can simply stretch the intervals associated to a same vertex that intersect to make them disjoint without changing any of the other intersections).

\begin{observationrep}\label{obs:disjoint free equivalent}
    The classes of disjoint $d$-interval and $d$-interval graphs are equivalent.
\end{observationrep}
\begin{proof}
It is clear that the class of disjoint $d$-interval graphs is contained in the class of $d$-interval graphs. To see the other direction, it suffices to notice that if we have a $d$-interval representation, one could represent every pair of intersecting intervals $[a,b]$ and $[c,d]$ (with $a<c<b<d$), associated to the same vertex $v$, by a single interval $[a,d]$ to obtain an equivalent disjoint $d$-interval representation (note that since we have decreased the number of intervals associated to vertex $v$ by one, to obtain a $d$-interval representation, one would technically need to add a ``dummy'' interval associated to $v$ which does not intersect any other interval in the representation). 
 
\end{proof}

However, if there are length restrictions, the previous observation does not hold. For unit intervals, one cannot replace two intersecting intervals $[a,b]$ and $[c,d]$, with $a<c<b<d$, by $[a,d]$, as the resulting interval would not be of unit length, and stretching it to make it unit might disrupt the rest of the intersections. 
Thus, in this case, it cannot be inferred that both definitions of multiple intervals lead to the same class of graphs. In fact, our results prove that they do not. 
Therefore, we study the generalization of Roberts characterization separately for both definitions of $d$-intervals. 





\section{Unit d-interval graphs}\label{SectionAlgorithm}

In this section, we generalize Roberts characterization of unit interval graphs for $d$-interval graphs. Recall that by $d$-interval graphs we refer to intersection graphs of $d$-intervals where the $d$ intervals are not necessarily disjoint, or in other words, to the most general definition.

\begin{theorem}\label{k12d+1-free are unit}
    Let $G$ be an interval graph. Then, for any natural number $d\geq 2$, $G$ is a unit $d$-interval graph if and only if $G$ does not contain a copy of a $K_{1,2d+1}$ as an induced subgraph. Furthermore, given a $K_{1,2d+1}$-free interval graph, a unit $d$-interval representation can be constructed in $\mathcal{O} (n +m)$ time, where $n$ and $m$ are the number of vertices and edges of the graph, respectively.
\end{theorem}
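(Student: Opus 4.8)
The "only if" direction is the easy one already observed in the introduction: a unit interval of length one cannot meet $2d+1$ pairwise disjoint unit intervals, so no vertex of a unit $d$-interval graph can be the center of an induced $K_{1,2d+1}$. The real content is the "if" direction together with the running-time bound. The plan is to start from an arbitrary interval representation of the $K_{1,2d+1}$-free interval graph $G$ — which can be computed in $\mathcal{O}(n+m)$ time by the classical recognition algorithms cited in the excerpt — and then rewrite it, region by region, into a unit $d$-interval representation. The key observation to exploit is that $K_{1,2d+1}$-freeness is a local clique-cover condition: at every point $x$ of the line, the set of intervals containing $x$ forms a clique, and $K_{1,2d+1}$-freeness bounds how many "independent directions" can emanate from any one interval, which should translate into a bound of roughly $2d$ on how many maximal cliques any single interval can participate in without being forced to be long.

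**Main construction.**

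First I would normalize the interval representation so that all $2n$ endpoints are distinct and sorted, giving an ordered list of elementary "slots" between consecutive endpoints; this is the $\mathcal{O}(n+m)$ bottleneck (computing the representation) plus $\mathcal{O}(n\log n)$ sorting, which we can make $\mathcal{O}(n+m)$ using the structure already produced by the recognition algorithm. Next, walking left to right, I would identify the sequence of maximal cliques $C_1, \dots, C_k$ in their natural linear order (interval graphs have this consecutive-cliques property). For each vertex $v$, the cliques containing $v$ form a contiguous block $C_{\ell(v)}, \dots, C_{r(v)}$. The goal is to assign to $v$ a union of unit intervals, one placed "at" (a unit neighborhood of) each clique $C_i$ in its block, but merging consecutive ones so that we never use more than $d$ pieces. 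The claw-freeness hypothesis is what guarantees this is possible: I claim that $K_{1,2d+1}$-freeness forces that for every vertex $v$, among the cliques $C_{\ell(v)}, \dots, C_{r(v)}$ one can choose at most $d$ "anchor" cliques such that every clique in the block is within distance one (in the clique order) of an anchor relevant to $v$ — intuitively because a vertex spanning too many cliques would, together with witnesses on either side, create a large claw. Placing a unit interval for $v$ around each anchor, with the coordinate system scaled so consecutive cliques are at distance $1$, yields unit intervals that intersect exactly when the original intervals did.

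**The main obstacle.**

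The delicate step — and the one I expect to be the crux — is proving the claw bound at the level of cliques and then verifying that the unit pieces reproduce *exactly* the intersection relation, with no spurious edges and no missing edges. Spurious edges are the danger: two vertices whose original intervals are disjoint might get unit pieces anchored at nearby cliques that accidentally overlap once everything is rescaled to unit length. Handling this requires a careful choice of where within each clique's slot to center the unit interval (e.g. using the relative order of left endpoints inside the clique, exactly as in Roberts' / the Scott–Suppes argument for $d=1$) so that the unit-length "spillover" into adjacent slots is controlled and only creates edges that were already present. I would first do the argument cleanly for $d=1$ to recover Roberts, isolating the lemma that in a $K_{1,3}$-free interval graph each vertex meets at most two consecutive maximal cliques — wait, that is not literally true, so more honestly: the lemma is that one can reorder/rescale so that each vertex's interval, after unitization, only needs length one; then I would show the general $d$ case amounts to partitioning each vertex's clique-block into at most $d$ runs of "consecutive-enough" cliques, the bound $d$ on the number of runs being forced by $K_{1,2d+1}$-freeness via an exchange/counting argument on claw witnesses. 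Finally, maintaining all of this within $\mathcal{O}(n+m)$ time means the rescaling and piece-merging must be done in a single left-to-right sweep with simple bookkeeping per interval, which is routine once the combinatorial placement rule is fixed.
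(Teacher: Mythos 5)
There is a genuine gap, and it sits exactly where you flag ``the crux.'' Your whole construction rests on the claim that $K_{1,2d+1}$-freeness lets you choose at most $d$ anchor cliques per vertex so that every clique in its block is within distance one of an anchor, and that unit intervals centred at those anchors (with consecutive cliques at distance $1$) reproduce the edge set exactly. Neither half is established. You yourself notice that the natural $d=1$ form of the lemma (``each vertex meets at most two consecutive maximal cliques'') is false, and the retreat to ``one can reorder/rescale so that each interval only needs length one'' is just a restatement of Roberts' theorem, not a proof of it. The bound of $d$ anchors does not follow from the clique order: $K_{1,2d+1}$-freeness bounds the number of \emph{pairwise disjoint} neighbours of a vertex by $2d$, not the number of maximal cliques it lies in (which can be large even for unit interval graphs), and adjacency of cliques in the consecutive order does not control disjointness of the intervals they contain. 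The spurious-edge problem for unit pieces ``spilling over'' into neighbouring clique slots is likewise acknowledged but not resolved. As written, the proposal is a plan with its main lemma missing.

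The paper's proof takes a materially different and more robust route that you may want to compare against. It never re-places intervals at clique positions. Instead, for each interval $I$ meeting a maximum of $m\le 2d$ pairwise disjoint intervals, it cuts $I$ into $\lceil m/2\rceil\le d$ \emph{subintervals of $I$ itself}, whose endpoints are chosen to coincide with endpoints of carefully selected neighbouring intervals (a maximum disjoint family $A_1,\dots,A_m$ in $N(I)$, plus auxiliary intervals $B_i$ defined as the latest-starting interval in the closed neighbourhood of $A_i$). It then proves two things: the new $t$-interval preserves exactly the intersections of $I$, and no interval of the resulting underlying family meets three pairwise disjoint intervals. The second point is the subtle one your sketch does not touch: a naive cut can turn an original interval that was not the centre of any $3$-claw into the centre of a new one, and the $B_i$'s exist precisely to prevent this. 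Once the underlying family is a $K_{1,3}$-free interval system preserving all adjacencies, the paper simply invokes the known linear-time conversion to a proper and then unit representation (Bogart--West, Gardi), which is the clean reduction to the $d=1$ case that your approach tries to re-derive from scratch. If you want to salvage your clique-based plan, you would need to prove the anchor-covering lemma via the disjoint-neighbour bound (essentially re-deriving the paper's $\lceil m/2\rceil$ count) and then still solve the spillover problem, at which point you are reconstructing the paper's cut-and-unitize argument in a less convenient coordinate system.
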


We present a polynomial-time algorithm that, given an arbitrary interval representation $\mathcal{I}$ of a $K_{1,2d+1}$-free graph, returns a $d$-interval representation $\mathcal{I'}$ of the graph where no interval of the underlying family of $\mathcal{I'}$ intersects three or more pairwise disjoint intervals. This ensures that the underlying family of intervals returned corresponds to an interval representation of a $K_{1,3}$-free graph, so we can use the algorithm described in~\cite{DBLP:journals/dm/BogartW99} to turn it into a proper representation (and then to a unit one in linear time \cite{DBLP:journals/dm/Gardi07}). 
Note that if an interval representation of the graph is not given, we can always compute it in linear time~\cite{DBLP:journals/siamdm/CorneilOS09}.

Before presenting the algorithm formally, let us give the idea behind it.
The algorithm constructs a family $\mathcal{I'}$ of $d$-intervals in the following way: for every interval $I\in\mathcal{I}$ that intersects $m$ (and no more than $m$) pairwise disjoint intervals, we create a $t$-interval $I_1 \cup \ldots \cup I_t$, where $t=\lceil \frac{m}{2} \rceil$. Note that for every interval $I$ that intersect only two disjoint intervals, we have $t=1$, and the interval $I_1$ added to $\mathcal{I'}$ will be exactly $I$. We will refer to such intervals as \emph{original intervals}, as they are equal to the ones in $\mathcal{I}$.
After creating the $t$-intervals described above, to obtain a $d$-interval representation of the graph, it suffices to add $d-t$ ``dummy'' intervals for each vertex that is represented by $t<d$ intervals (where by ``dummy'' intervals we mean that they do not intersect any other interval from the representation). Each $d$-interval $I_1 \cup \ldots \cup I_d$ introduced will preserve the same intersections as the interval $I\in \mathcal{I}$, and each $I_i$ will possess three key properties: it intersects at most two disjoint original intervals, it contains an original interval, and each of its endpoints coincides with an endpoint of an original interval. These properties ensure that the representation $\mathcal{I'}$ can be made unit. 

\paragraph*{Algorithm}
Let the family of intervals $\mathcal{I}$ be an interval representation of $G$. For every interval $I \in \mathcal{I}$, let $l(I)$ and $r(I)$ stand for its left and right endpoint, respectively.
Furthermore, define a partial order as follows: given two intervals $I,J \in \I$, let $I \prec J$ if and only if $r(I)<l(J)$ (i.e. interval $J$ is fully to the right of interval $I$). Two intervals are incomparable if they intersect. 
\begin{enumerate}
    \item[\textbf{Step 1}]\label{algo:step 0} Initialize a set of intervals $\mathcal{C}$ with all the intervals of $\mathcal{I}$, set $\I':=\emptyset$, 
    and go to \ref{algo:step 1}.
    \item[\textbf{Step 2}] \label{algo:step 1} Pick an interval $I$ of $\mathcal{C}$, remove it from the set and define its neighborhood $\mathcal{N}(I)=\{J \in \mathcal{I} \colon J \cap I \neq \emptyset \}$. Let $m$ be the maximum number of pairwise disjoint intervals that $I$ intersects. If $m\leq 2$, go to \ref{algo:step 2}; if $m=3$, go to  \ref{algo:step 3}; and if $m>3$, go to \ref{algo:step 4}. 
    \item[\textbf{Step 3}]\label{algo:step 2} If $m\leq 2$, add the interval $I_1=I$ to the family $\mathcal{I'}$ and call $I_1$ an original interval. Then go to \ref{algo:step 5}.
    \item[\textbf{Step 4}]\label{algo:step 3} If $m=3$, define four auxiliary intervals: 
    \begin{align*}
        A_1 &= \argmin_{J \in \mathcal{N}(I)} \{r(J)\} \quad \quad
            &A_2 &= \argmin_{\{J \in \mathcal{N}(I)\colon A_{1}\prec J\}} \{r(J)\} \\
            A_{4} &= \argmax_{J\in \mathcal{N}(I)} \{l(J)\} \quad \quad
            &A_{3} &= \argmax_{\{J\in \mathcal{N}(I) \colon J\prec A_{4}\}} \{l(J)\}
    \end{align*}
     Then add to $\mathcal{I'}$ the 2-interval $I_1\cup I_2$, with $I_1 = [l(I), r(A_2)]$ and $I_2= [l(A_3), r(I)]$. Note that $A_2$ and $A_3$ necessarily intersect, as otherwise we would have $m\geq 4$, so $I_1\cup I_2$ is not a disjoint 2-interval. After adding it to $\mathcal{I'}$, go to \ref{algo:step 5}.
    \item[\textbf{Step 5}]\label{algo:step 4} If $m>3$, define two families of auxiliary intervals. The first family $\mathcal{A}:=\{A_i \,|\, i \in \{1,\ldots, m\}\}$ forms a maximum set of pairwise disjoint intervals intersecting $I$, and it will ensure that all the intersections are preserved. It is defined as follows:
    \begin{align*}
            A_1 &= \argmin_{J \in \mathcal{N}(I)} \{r(J)\} 
            &A_i &= \argmin_{\{J \in \mathcal{N}(I)\colon A_{i-1}\prec J\}} \{r(J)\} \text{  , $\forall$ } i\in\{2,\ldots,m-2\} \\
            A_{m} &= \argmax_{J\in \mathcal{N}(I)} \{l(J)\} 
            &A_{m-1} &= \argmax_{\{J\in \mathcal{N}(I) \colon J\prec A_{m}\}} \{l(J)\}
    \end{align*}
        The second family $\mathcal{B}:=\{B_i\, |\, i \in \{1,\ldots, m\}\}$ is a tool to ensure that each new interval $I_i$ intersects only two disjoint intervals in $\mathcal{I'}$. Note that restricting each $I_i$ to intersect only two disjoint intervals from the family $\mathcal{A}$ is not enough: for example, in \cref{fig:alg}, if $I_2$ were defined as $[l(A_3), r(A_4)]$, then it would intersect three pairwise disjoint intervals in $\mathcal{I'}$ (as all the intervals except $I$ are original intervals in this example), whereas if the left endpoint of $I_2$ were chosen as $r(A_{2i-1})$, then an original interval that was not the center of a claw in $\mathcal{I}$ might become the center of a new claw in $\mathcal{I'}$. 
        Thus, for every $i\in\{1,\ldots, m\}$, $B_i$ is defined as follows:
            \begin{align*}  
            B_{i} &= \argmax_{J\in \mathcal{N}(A_{i}
        )\cup A_{i}} \{l(J)\}
            \end{align*}
        In other words, $B_{i}$ is the interval in the closed neighborhood of $A_{i}$ starting the latest. Note that if there does not exist any interval intersecting $A_{i}$ which starts after $A_{i}$, then $B_{i}=A_{i}$ since we are considering the closed neighborhood. 
    Now, add to $\mathcal{I'}$ the $t$-interval $I_1\cup ...\cup I_t$, defined as follows. We distinguish two slightly different cases:
    \begin{enumerate}
        \item  If $m$ is even, i.e., $m=2t$ for some $t>1$, define $I_1= [l(I), r(A_2)]$, $I_i = [l(B_{2i-1}), r(A_{2i})]$ for every $i \in\{2,\ldots,t-1\}$, and 
        $I_t = [l(A_{2t-1}, r(I)]$.
       
        \item If $m$ is odd, i.e., $m=2t-1$ for $t>2$, define $I_{t-1}$ and $I_{t}$ differently, as $I_{t-1} = A_{2t-3}$ and $I_t = [l(A_{2t-2}, r(I)]$, and the rest of the intervals as before.
       
    \end{enumerate}
Notice that by definition, the intervals $I_1, ..., I_t$ are actually pairwise disjoint, so if $m>3$, the $t$-interval added to $\mathcal{I'}$ is a disjoint $d$-interval. After adding the $t$-interval, go to \ref{algo:step 5}.
\item\label{algo:step 5} If $\mathcal{C}=\emptyset$, return $\mathcal{I'}$, else go to \ref{algo:step 1}.
\end{enumerate}

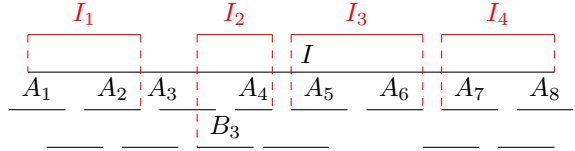
\begin{figure}
    \centering
      \centering
        
    \begin{tikzpicture}[scale=0.25]

    \draw (-10,5) -- (18,5) node[midway,above right] {$I$};

\draw[red] (-10,7) -- (-4,7) node[midway,above] {$I_1$};
\draw[red] (-1,7) -- (3,7) node[midway,above] {$I_2$};
\draw[red] (4,7) -- (11,7) node[midway,above] {$I_3$};
\draw[red] (12,7) -- (18,7) node[midway,above] {$I_4$};

\draw[red, dashed] (-10,7) -- (-10,5) ;
\draw[red, dashed] (-4,7) -- (-4,3) ;
\draw[red, dashed] (-1,7) -- (-1,1) ;
\draw[red, dashed] (3,7) -- (3,3) ;
\draw[red, dashed] (4,7) -- (4,3) ;
\draw[red, dashed] (11,7) -- (11,3) ;
\draw[red, dashed] (12,7) -- (12,3) ;
\draw[red, dashed] (18,7) -- (18,5) ;

    \draw (-11,3) -- (-8,3) node[midway,above] {$A_1$};
    \draw (-7,3) -- (-4,3) node[midway,above] {$A_2$};
    \draw (-3,3) -- (0,3) node[midway,above left= 0.0001cm] {$A_3$};
    \draw (1,3) -- (3,3) node[midway,above] {$A_4$};
    \draw (4,3) -- (7,3) node[midway,above] {$A_5$};
    \draw (8,3) -- (11,3) node[midway,above] {$A_6$};
    \draw (12,3) -- (15,3) node[midway,above] {$A_7$};
    \draw (16,3) -- (19,3) node[midway,above] {$A_8$};

    \draw (-9,1) -- (-6,1) node[midway,above] {};
    \draw (-5,1) -- (-2,1) node[midway,above] {};
    \draw (-1,1) -- (2,1) node[midway,above] {$B_3$};
    \draw (2.5,1) -- (6,1) node[midway,above] {};
    \draw (11,1) -- (14,1) node[midway,above] {};
    \draw (15,1) -- (18,1) node[midway,above] {};


    \end{tikzpicture}
    \caption{Interval $I$ intersects 8 disjoint intervals. In red, the 4-interval returned by the algorithm. Note that if $l(I_2)$ were defined as $l(A_3)$ instead of $l(B_3)$, it would create a forbidden $K_{1,3}$.}
    \label{fig:alg}
\end{figure}

\cref{fig:alg} illustrates the algorithm on a concrete interval which intersects eight pairwise disjoint intervals. Before proceeding to the proof of correctness of the algorithm, we highlight the properties of the intervals constructed that will be useful to prove the next three claims. In the following, we say that an interval $I$ of $\mathcal{I}$ has been \emph{transformed} into a $t$-interval $I_1 \cup \dots \cup I_t$ by the algorithm after it has been processed.

\begin{observation}\label{claim:properties}
    Let $I\in \mathcal{I}$ be an interval transformed into $I_1 \cup \dots \cup I_t$ by the algorithm, for some $1< t\leq d$.
    Then, for every $i\in\{1,\ldots, t\}$:
    \begin{enumerate}
        \item The left (resp., right) endpoint of every interval $I_i$ coincides with the left (resp., right) endpoint of an original interval.
        \item There is an original interval contained in $I_i$.  
    \end{enumerate}
\end{observation}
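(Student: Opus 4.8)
The plan is to verify Observation~\ref{claim:properties} by a direct case analysis mirroring the three branches of the algorithm that can produce a $t$-interval with $t>1$, namely \ref{algo:step 3} ($m=3$) and \ref{algo:step 4} ($m>3$, even and odd sub-cases). For each interval $I_i$ produced, I would exhibit explicitly the original interval(s) whose endpoints are used as $l(I_i)$ and $r(I_i)$, and the original interval sitting inside $I_i$. The key observation underpinning everything is that \emph{every interval in $\mathcal{A}$ is an original interval}: each $A_j$ belongs to $\mathcal{N}(I)\subseteq\mathcal{I}$, and an interval of $\mathcal{I}$ is transformed into more than one piece only when it itself is the center of a $K_{1,\geq 3}$; since $\mathcal{A}$ is a maximum family of pairwise disjoint intervals meeting $I$, no $A_j$ can simultaneously meet three pairwise disjoint neighbors of itself without contradicting $K_{1,2d+1}$-freeness in a way I would need to rule out — more simply, one argues that each $A_j$ intersects at most two pairwise disjoint intervals, hence is added to $\mathcal{I'}$ unchanged as an original interval. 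Likewise each $B_j$, being defined as the latest-starting interval in the closed neighborhood of $A_j$, is either $A_j$ itself or a neighbor of $A_j$; in both cases I must check it is original, which again follows because it intersects at most two pairwise disjoint intervals (any three would, together with $I$ via $A_j$, or directly, force a forbidden claw).

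With that established, the endpoint claim (item 1) is almost immediate from the formulas. For $m=3$: $I_1=[l(I),r(A_2)]$ has right endpoint $r(A_2)$, an original endpoint; its left endpoint is $l(I)$, and here I use that $I$ itself, being the center of a $K_{1,3}$ that is \emph{maximal} in the sense that $m=3$ exactly, still has its left endpoint equal to $l(A_1)$ — wait, that is not literally true, so instead I would note that $l(I)$ may differ from any original left endpoint, which would \emph{break} the claim. Let me reconsider: the correct reading is that $l(I_1)=l(A_1)$ must be arranged, or that one replaces $l(I)$ by $l(A_1)$; I would follow the algorithm's intent and observe that $A_1=\argmin_{J\in\mathcal{N}(I)}r(J)$ starts no later than... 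Actually the cleanest route is: since $A_1\in\mathcal{N}(I)$ and $A_1$ is leftmost in the disjoint family, and $I$ intersects $A_1$, we have $l(A_1)\leq r(A_1)$ and $I$ overlaps $A_1$; the algorithm's intervals are meant to have $l(I_1)=\min(l(I),l(A_1))$ or simply $l(A_1)$. I would state item 1 for the endpoints that are manifestly of the form $r(A_j)$, $l(B_j)$, $l(A_j)$, and handle the two ``boundary'' endpoints $l(I_1)$ and $r(I_t)$ by the same reasoning applied to the symmetric side, possibly noting $l(I_1)$ coincides with $l(A_1)$ after the (harmless) normalization that $I$ starts where its leftmost disjoint neighbor does, or by treating $I$ restricted appropriately. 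This is the step I expect to be the main obstacle: pinning down precisely why the two extreme endpoints $l(I_1)$ and $r(I_t)$ coincide with original endpoints, since syntactically they are written as $l(I)$ and $r(I)$.

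For item 2 (containment of an original interval), I would argue case by case: $I_1=[l(I),r(A_2)]\supseteq A_1\cup(\text{part of }A_2)$ and in fact $A_1\subseteq I_1$ since $r(A_1)\leq r(A_2)$ (as $A_1\prec A_2$ would give disjointness, but they may overlap; regardless $r(A_1)\leq r(A_2)$ by the $\argmin$ choice) and $l(A_1)\geq l(I)$ because $A_1$ meets $I$... hmm, $A_1$ meeting $I$ only gives $l(A_1)\leq r(I)$ and $r(A_1)\geq l(I)$, not $l(A_1)\geq l(I)$. So instead I would show $A_2\subseteq I_1$: indeed $r(A_2)=r(I_1)$ by definition, and $l(A_2)\geq l(I_1)=l(I)$ since $A_2$ intersects $I$ (giving $l(A_2)\leq r(I)$, not enough) — again the subtlety is on the left. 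The robust statement is that $I_i$ for $2\leq i\leq t-1$ contains $A_{2i-1}$ entirely when $B_{2i-1}$'s left endpoint is $\leq l(A_{2i-1})$; but $B_{2i-1}$ starts \emph{latest} in the closed neighborhood of $A_{2i-1}$, so $l(B_{2i-1})\geq l(A_{2i-1})$, meaning $A_{2i-1}$ might stick out on the left of $I_i$. Hence the original interval contained in $I_i$ should rather be $A_{2i}$: $r(I_i)=r(A_{2i})$, and $l(A_{2i})\geq l(B_{2i-1})=l(I_i)$ because $A_{2i-1}\prec A_{2i}$ and $B_{2i-1}\in\mathcal{N}(A_{2i-1})\cup\{A_{2i-1}\}$ starts no later than... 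I would carefully verify $l(B_{2i-1})\leq l(A_{2i})$ using that $B_{2i-1}$ intersects $A_{2i-1}$ which is disjoint from $A_{2i}$, so $r(B_{2i-1})<$ or $\leq$ something forcing $l(B_{2i-1})\leq l(A_{2i})$. Wrapping up, I would present the proof as: (a) every $A_j,B_j$ is original; (b) for each piece $I_i$ name the original interval it contains ($A_1$ or $A_2$ in $I_1$, an appropriate $A_{2i}$ or $A_{2i-1}$ in the middle, $A_{2t-2}$ or $A_{2t-1}$ in $I_t$) and the two original intervals realizing its endpoints; (c) the symmetric-side bookkeeping. The genuinely delicate point throughout is the left/right endpoint matching for the first and last pieces and the direction of the inequalities $l(B_{2i-1})\lessgtr l(A_{2i})$, so I would spend the bulk of the write-up making those monotonicity comparisons explicit from the $\argmin/\argmax$ definitions and the disjointness of consecutive $A_j$'s.
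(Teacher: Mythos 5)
Your write-up never closes: it is a plan with two unresolved obstacles, and the one auxiliary lemma you do commit to is false. The paper gives no explicit proof of \cref{claim:properties} --- it is meant to be read directly off the defining formulas in Steps~4 and~5 --- so the question is whether your reading of those formulas is sound. The central assertion you lean on, that every $A_j$ and every $B_j$ is an \emph{original} interval (i.e., left unchanged by the algorithm) because it ``intersects at most two pairwise disjoint intervals,'' does not hold: the extremal conditions defining $A_j$ and $B_j$ constrain them only relative to $\mathcal{N}(I)$ (resp.\ the closed neighborhood of $A_j$), and nothing prevents $A_j$ from being itself the center of an induced $K_{1,3}$, hence split into several pieces and absent from $\mathcal{I}'$ as a single interval. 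The obstruction you correctly flag for $l(I_1)=l(I)$ and $r(I_t)=r(I)$ is the same phenomenon, and it tells you the statement cannot be established under the reading ``original $=$ unchanged.'' The way out --- and what \cref{claim:preserve intersections}, \cref{claim:unit} and \cref{partialinterval} actually use --- is to read ``original interval'' as ``interval of the input representation $\mathcal{I}$'': the contradictions derived there are with the maximality of the family $\mathcal{A}\subseteq\mathcal{N}(I)$ and with the value of $m$ computed for $I$ in $\mathcal{I}$, so witnesses taken from $\mathcal{I}$ suffice. Under that reading item~1 is immediate, since every endpoint produced by the algorithm is literally written as $l(\cdot)$ or $r(\cdot)$ of $I$, of some $A_j$, or of some $B_j$, all of which belong to $\mathcal{I}$.

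For item~2 you found the right witness and the right inequality for the middle pieces ($A_{2i}\subseteq I_i$ because $l(B_{2i-1})\leq r(A_{2i-1})<l(A_{2i})$ and $r(I_i)=r(A_{2i})$), but you left the first and last pieces hanging, and they close by the same one-line argument: $A_2\subseteq I_1$ because $l(A_2)>r(A_1)\geq l(I)$ (the first inequality from $A_1\prec A_2$, the second because $A_1\in\mathcal{N}(I)$), and symmetrically $A_{2t-1}\subseteq I_t$ because $r(A_{2t-1})<l(A_{2t})\leq r(I)$; the case $m=3$ and the odd case are identical ($A_2\subseteq I_1$ and $A_3\subseteq I_2$, resp.\ $I_{t-1}=A_{2t-3}$ and $A_{2t-2}\subseteq I_t$). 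So the verification is genuinely routine once the meaning of ``original'' is fixed; as written, your proposal both asserts a false intermediate claim and stops short of proving either item.
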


Now, to prove the correctness of the algorithm, we need to show that for every interval $I\in \mathcal{I}$, the $t$-interval $I_i\cup ...\cup I_t \in \mathcal{I'}$ preserves the same intersections as $I$, and that no interval in the underlying family of $\mathcal{I'}$ intersects three pairwise disjoint intervals. In the next claim, we prove that intersections are preserved:

\begin{claimrep}\label{claim:preserve intersections}
     Let $I$ be an interval transformed into $I_1 \cup \dots \cup I_t$ by the algorithm, for some $1\leq t\leq d$.
    Then, the $t$-interval $I_1 \cup \ldots \cup I_t$ preserves the intersections of $I$.
\end{claimrep}
\begin{claimproof}
    It is clear that no new intersections are created as $I_1 \cup \ldots \cup I_t \subseteq I$. 
    To see that no intersection is lost, suppose that there exists an interval $L$ that intersects $I$ in the original representation $\mathcal{I}$, and after the algorithm finishes, $L$ is transformed into a $t_0$-interval $L_1 \cup \ldots \cup L_{t_0}$ (for some $1\leq t_0\leq d$, where if $t_0=1$, the interval remains as in the original representation) such that
    the $t$-interval $I_1 \cup \ldots \cup I_t$ does not intersect the $t_0$-interval $L_1 \cup \ldots \cup L_{t_0}$ in $\mathcal{I'}$.
    Since $l(I_1)=l(I)$ and $r(I_t)=r(I)$ (and the same holds for $L$), this means that there exists an $L_j$ (with $1\leq j \leq t_0$) such that $I_i \prec L_j \prec I_{i+1}$ for some $1\leq i \leq t-1$. 
    
    For $1\leq t\leq 2$, this cannot occur because $I\subseteq I_1 \cup \ldots \cup I_t $.
    For $t>3$, since the set of intervals $A_k$ used to defined the $t$-interval associated to $I$ forms a maximal set of pairwise disjoint intervals intersecting $I$, 
    we cannot have that $A_{k} \prec L_j \prec A_{k+1}$ for any $1\leq k \leq 2t-1$. Indeed, this would contradict maximality, as $L_j$ is either an original interval or it contains an original interval (by \cref{claim:properties}).
    Thus, the only possible option is that there exists an $i$ such that $A_{2i}\prec L_j \prec B_{2i+1}$ (where $B_{2i+1}$ is different from $A_{2i+1}$). Then, since $B_{2i+1}$ intersects $A_{2i+1}$ and $L_j \prec B_{2i+1}$, we have that $r(L_j) < r(A_{2i+1})$. But this contradicts the choice of $A_{2i+1}$, which should have been $L_j$ or the original interval contained in $L_j$, as $A_{2i}\prec L_j$.
\end{claimproof}

The next two claims are dedicated to proving that no interval in the underlying family of $\mathcal{I'}$ intersects three or more pairwise disjoint intervals. We distinguish the cases when the center of the claw is an original interval and when it is not.

\begin{claim}\label{claim:unit}
Let $I\in\mathcal{I}$ be an original interval (i.e., transformed to $I_1$ by the algorithm). Then, $I_1$ intersects at most two disjoint intervals in the underlying family of $\mathcal{I'}$.
\end{claim}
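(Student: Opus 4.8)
The plan is to argue by contradiction. Suppose $I_1=I$ meets three pairwise disjoint intervals $J_1\prec J_2\prec J_3$ of the underlying family of $\mathcal{I'}$. Dummy intervals meet nothing, so each $J_k$ is a piece produced when the algorithm transformed some interval of $\mathcal{I}$, and hence \cref{claim:properties} applies to each $J_k$ (it holds trivially when $J_k$ is itself an original interval, since then $J_k$ coincides with the interval of $\mathcal{I}$ it came from). First I would pin an original interval to each $J_k$: an original interval $O_2\subseteq J_2$ given by \cref{claim:properties}(2), the original interval $O_1$ with $r(O_1)=r(J_1)$, and the original interval $O_3$ with $l(O_3)=l(J_3)$, the latter two given by \cref{claim:properties}(1). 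The target is to show that $O_1,O_2,O_3$ are pairwise disjoint and all meet $I$. Since these are intervals of $\mathcal{I}$, this contradicts the fact that $I$ was classified as an original interval, that is, that the maximum number $m$ of pairwise disjoint intervals of $\mathcal{I}$ meeting $I$ is at most $2$.

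The remainder is a short endpoint chase. Because $I$ meets $J_1\prec J_2$ we have $l(I)\le r(J_1)<l(J_2)$, and because $I$ meets $J_2\prec J_3$ we have $r(J_2)<l(J_3)\le r(I)$; together these give $l(I)<l(J_2)\le r(J_2)<r(I)$, i.e.\ $J_2\subsetneq I$, so $O_2\subseteq J_2\subsetneq I$ and $O_2\in\mathcal{N}(I)$. For $O_1$ we get $r(O_1)=r(J_1)\ge l(I)$ and $l(O_1)\le r(J_1)<l(J_2)\le r(J_2)\le r(I)$, so $O_1\in\mathcal{N}(I)$; the argument for $O_3$ is symmetric. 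Finally $r(O_1)=r(J_1)<l(J_2)\le l(O_2)$ and $r(O_2)\le r(J_2)<l(J_3)=l(O_3)$, so $O_1\prec O_2\prec O_3$; thus the three intervals are pairwise disjoint, and the contradiction follows.

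The one genuinely delicate point, and the reason \cref{claim:properties}(1) is needed rather than just part (2), is that $I$ may meet $J_1$ (or $J_3$) without meeting the original interval that (2) places inside it, since that interval could lie at the far end of $J_1$ away from $I$. Part (1) circumvents this by pinning an original interval exactly at the endpoint of $J_1$ (and of $J_3$) closest to $I$, which is then automatically in $\mathcal{N}(I)$ and automatically separated from $O_2$. Everything else is a routine inequality check, and \cref{claim:preserve intersections} is not used here.
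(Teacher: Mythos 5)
Your proposal is correct and follows essentially the same route as the paper's proof: both invoke \cref{claim:properties} to pin an original interval at the right endpoint of the leftmost interval, inside the middle one, and at the left endpoint of the rightmost one, then conclude that three pairwise disjoint intervals of $\mathcal{I}$ meet $I$, contradicting its classification as original. The only difference is that you spell out the endpoint inequalities that the paper leaves implicit.
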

\begin{claimproof}
    Suppose, towards a contradiction, that there exists an original interval $I_1$ that intersects three pairwise disjoint intervals $L_1, L_2$ and $L_3$ in the underlying family of $\mathcal{I'}$, with $L_1 \prec L_2 \prec L_3$. 
    By \cref{claim:properties}, there exists an original interval $L_1'$ with the same right endpoint as $L_1$, an original interval $L_2'$ contained in $L_2$, and an original interval $L_3'$ with the same left endpoint as $L_3$. 
    Note that if any of the $L_i$ are original, then $L_i'=L_i$. But then, $L_1' \prec L_2' \prec L_3'$ are three pairwise disjoint original intervals that intersect $I_1$, which contradicts the fact that it is an original interval. Indeed, this implies that the interval $I$ intersects three pairwise disjoint intervals in $\mathcal{I}$, and so the algorithm would have transformed it into a $t$-interval with $t$ strictly greater than 1. 
 \end{claimproof}

    \begin{claim}\label{partialinterval}
     Let $I\in\mathcal{I}$ be an interval transformed into the $t$-interval $I_1 \cup \dots \cup I_t$ by the algorithm, for some $1<t\leq d$.
    For every $1\leq i \leq t$, $I_i$ intersects at most two disjoint intervals of the underlying family of $\mathcal{I'}$.
    \end{claim}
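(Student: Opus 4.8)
The plan is to mimic the structure of the proof of \cref{claim:unit}, but now arguing locally about each newly created interval $I_i$ using the two structural facts from \cref{claim:properties}: every $I_i$ has both endpoints coinciding with endpoints of original intervals, and every $I_i$ contains an original interval. First I would fix $I_i$ and suppose, for contradiction, that it meets three pairwise disjoint intervals $L_1 \prec L_2 \prec L_3$ of the underlying family of $\mathcal{I'}$. As in \cref{claim:unit}, I would replace each $L_j$ by an original witness: an original interval $L_1'$ with $r(L_1')=r(L_1)$, an original interval $L_2' \subseteq L_2$, and an original interval $L_3'$ with $l(L_3')=l(L_3)$; these are still pairwise disjoint and each still meets $I_i$. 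So it suffices to rule out $I_i$ meeting three pairwise disjoint \emph{original} intervals.

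Next I would do a case analysis according to which of the four shapes $I_i$ has: $I_1=[l(I),r(A_2)]$, a generic middle interval $I_i=[l(B_{2i-1}),r(A_{2i})]$, the last interval $I_t=[l(A_{2t-1}),r(I)]$, and (in the odd case) the special interval $I_{t-1}=A_{2t-3}$. For the middle interval $I_i=[l(B_{2i-1}),r(A_{2i})]$, the key point is the definition of $B_{2i-1}$ as the interval in the closed neighborhood of $A_{2i-1}$ starting latest: any original interval meeting $I_i$ must, on its left side, either be $\preceq A_{2i-1}$-related or actually intersect $A_{2i-1}$ (hence not start before $l(B_{2i-1})$ would allow a disjoint third interval to its left), while on its right side it cannot extend past $A_{2i}$ without contradicting that $\{A_k\}$ is a maximal pairwise-disjoint family intersecting $I$ (the argument here is exactly the one used in \cref{claim:preserve intersections}: an original interval squeezed strictly between $A_{2i}$ and $A_{2i+1}$, or between consecutive $A_k$'s, would contradict maximality). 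Combining these, a third original interval disjoint from the other two and meeting $I_i$ would force a new disjoint interval among the $A_k$'s or force the choice of some $A_k$ to have been different, a contradiction. For $I_1$ and $I_t$ the left (resp.\ right) boundary is $l(I)$ (resp.\ $r(I)$), so the analogous $B$-based argument is only needed on one side and the maximality argument on the other; for $I_{t-1}=A_{2t-3}$ the statement is immediate since $A_{2t-3}$ is a single original interval and it already meets only two disjoint original intervals, namely (at most) $A_{2t-4}$-side and $A_{2t-2}$, by maximality of $\mathcal{A}$ together with the definition of $B_{2t-3}$ in the adjacent interval $I_{t-2}$.

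The main obstacle I anticipate is making the ``left side'' argument via $B_{2i-1}$ fully rigorous: I need to show precisely that if an original interval $L'$ meets $I_i=[l(B_{2i-1}),r(A_{2i})]$ and lies entirely to the left of some other original interval also meeting $I_i$, then $L'$ must intersect $A_{2i-1}$, for otherwise $L'$ together with $\{A_k\}$ would contradict the maximality of the pairwise-disjoint family, and then deduce $r(L') \geq l(B_{2i-1})$ contradicts $L' \prec A_{2i-1}$-type separations — this is the same flavor of reasoning as in \cref{claim:preserve intersections} but applied to the $B$ rather than the $A$ intervals, and it requires carefully tracking the inequalities $l(B_{2i-1}) \geq l(A_{2i-1})$ and $B_{2i-1} \cap A_{2i-1} \neq \emptyset$. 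I would organize the whole proof so that the three disjoint original witnesses $L_1' \prec L_2' \prec L_3'$ are inserted into the maximal disjoint family $\mathcal{A}$: since $I_i \subseteq I$, all three meet $I$, and at most two of the $A_k$ can lie in the span of $I_i$ in a way compatible with $I_i$'s endpoints being original-interval endpoints, so one of the $L_j'$ must be insertable as a new pairwise-disjoint interval meeting $I$, contradicting maximality of $\mathcal{A}$. This reframing should let me handle all four endpoint-shape cases uniformly with only minor case-specific bookkeeping.
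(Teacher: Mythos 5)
Your main line of attack---contradiction, replacing $L_1,L_2,L_3$ by original witnesses via \cref{claim:properties}, then a case analysis on the four endpoint shapes of $I_i$ in which the left boundary is controlled by the choice of $B_{2i-1}$ and the right boundary by the choice of $A_{2i}$---is essentially the paper's proof. The paper's two cases for a middle interval $I_i=[l(B_{2i-1}),r(A_{2i})]$ are exactly the dichotomy you describe: either $L_2'$ lies strictly to the right of $A_{2i-1}$, in which case $r(L_2')<r(A_{2i})$ contradicts the argmin defining $A_{2i}$, or $L_2'$ intersects $A_{2i-1}$, in which case $l(L_2')>l(B_{2i-1})$ contradicts the argmax defining $B_{2i-1}$; the boundary intervals $I_1$, $I_t$ and the odd-case $I_{t-1}=A_{2t-3}$ are handled separately, as you indicate.

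However, the ``reframing'' you propose in your last paragraph---inserting one of $L_1'\prec L_2'\prec L_3'$ into $\mathcal{A}$ and contradicting the maximality of $\mathcal{A}$ as a pairwise disjoint family meeting $I$---does not work and should be dropped. Since $I$ itself meets $m\geq 4$ pairwise disjoint intervals, exhibiting three pairwise disjoint intervals that meet a piece $I_i\subseteq I$ contradicts nothing about $\mathcal{A}$: the witnesses may well intersect members of $\mathcal{A}$ (for instance $L_1'$ may equal $A_{2i-1}$ and $L_3'$ may equal $A_{2i}$ while $L_2'$ overlaps $A_{2i-1}$), so none of them need be insertable as a new member of the disjoint family. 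Even when $L_2'$ is disjoint from $A_{2i-1}$ and satisfies $r(L_2')<r(A_{2i})$, it may still intersect $A_{2i}$, so that swapping it in preserves the cardinality of $\mathcal{A}$ and no maximality violation arises; the genuine contradiction is with the extremal (argmin/argmax) choices of $A_{2i}$ and $B_{2i-1}$, which is what the paper invokes. Stick with the case analysis of your second paragraph; the uniform maximality shortcut is not a valid substitute for it.
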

    \begin{claimproof}
        We proceed by contradiction. Suppose that there exists an interval $I_i$, with $1\leq i \leq t$ that intersects three pairwise disjoint intervals $L_1, L_2, L_3$, with $L_1 \prec L_2 \prec L_3$. By \cref{claim:properties}, there exists an original interval $L_1'$ with the same right endpoint as $L_1$, an original interval $L_2'$ contained in $L_2$, and an original interval $L_3'$ with the same left endpoint as $L_3$.
    
        Assume first that $t=2$. Then, if $i=1$, this contradicts the choice of the interval $A_2$ (resp. $A_3$ if $i=2$), which should have been $L_2'$. 

        Let us now study the general case for $t>2$. Suppose first that $1<t<d$ and $I_i$ is defined as $[B_{2i-1}, A_{2i}]$ with $B_{2i-1} \neq A_{2i-1}$. 
        We distinguish two cases:
        \begin{enumerate}
            \item $r(L_1) > r(A_{2i-1})$. Then, since we are assuming that $L_1$ and $L_2$ are disjoint, $l(L_2)> r(A_{2i-1})$. Furthermore, as $L_3$ also intersects $I_i$, we need $r(L_2) < r(A_{2i})$. But this contradicts the choice of $A_{2i}$, which should have been $L_2'$. 
            \item $r(L_1)< r(A_{2i-1})$. If $l(L_2) > r(A_{2i-1})$, we are in the same case as before. Thus, $L_2$ and $A_{2i-1}$ must intersect. However, we have $l(L_2)> l(B_{2i-1})$ (since otherwise $I_i$ would not be able to intersect $L_1$ on its left extreme). This contradicts the choice of $B_{2i-1}$ if $L_2'$ intersects $A_{2i-1}$, or the choice of $A_{2i}$ otherwise. 
        \end{enumerate}
       On the other hand, if $B_{2i-1}=A_{2i-1}$, then by construction, since we take the two disjoint intervals that finish first, we cannot have three pairwise disjoint intervals intersecting $I_i$. This is also the case for $I_1$ and $I_t$ (although in the latter case, we take the two disjoint intervals starting last). Finally, for odd claws, it is also clear that $I_{t-1}$ intersects at most two disjoint intervals, as it is equal to an original interval.
     \end{claimproof}

Combining Claims~\ref{claim:preserve intersections}, \ref{claim:unit} and \ref{partialinterval}, plus the fact that we can trivially transform a $t$-interval with $t<d$ into a $d$-interval, we obtain that the algorithm returns a $d$-interval representation of the input graph where no interval of the underlying family intersects more than two disjoint intervals, which as explained before can be converted into a unit representation. 
The last part of \cref{k12d+1-free are unit} follows because an efficient implementation of the algorithm described above requires $\mathcal{O}(1+deg(v))$ operations for each vertex $v$ (where $deg(v)$ denotes the degree of vertex $v$), as it suffices to iterate over the neighborhood of a given interval to transform it into the corresponding $d$-interval. Finally, the obtained representation can be converted to a unit representation in linear time, which yields the stated runtime $\mathcal{O}(n + m)$.
This concludes the proof of \cref{k12d+1-free are unit}.   

We have proven that the algorithm constructs a unit $d$-interval representation, but it is not a disjoint one. Indeed, as mentioned before, in the case of maximal $K_{1,3}$'s, the constructed intervals $I_1$ and $I_2$ intersect each other. However, in the case of maximal $K_{1,m}$'s with $m>3$, the $t$ intervals of the $t$-interval created are actually pairwise disjoint. Thus, we obtain as a direct corollary that if $G$ is a $K_{1,2d+1}$-free interval graph not containing any maximal $K_{1,3}$'s, then $G$ is a disjoint unit $d$-interval graph. 
In fact, with a more careful analysis, we can infer an even stronger corollary, which instead of requiring the absence of maximal 3-claws all together, only forbids a subset of them. We refer to these forbidden claws, which are exactly those maximal 3-claws contained in an induced $E$ graph, as $E$-claws. Recall that an $E$ graph (or $\text{star}_{1,2,2}$) is a graph on six vertices which has as edge set a path $v_1,v_2,v_3,v_4,v_5$ and an additional edge $(v_3,v_6)$.


\begin{corollaryrep}\label{corollary:E claws}
    Let G be a $K_{1,2d+1}$-free graph that does not contain any $E$-claws. Then, $G$ is a disjoint unit $d$-interval graph.
\end{corollaryrep}

\begin{proof}
To prove the theorem, we modify \ref{algo:step 3} of the previous algorithm so that it produces a disjoint 2-interval. 

\begin{enumerate}
    \item[\textbf{Step 4'}]\label{algo:step 4prime} Let $I$ be an interval and let $m=3$ be the maximum number of pairwise disjoint intervals that it intersects. By assumption, the vertex associated to $I$ is a center of a maximal claw which is not an $E$-claw. 
We define  \begin{align*}
            A_1 &= \argmin_{J \in \mathcal{N}(I)} \{r(J)\} \\
            A_2 &= \argmin_{\{J \in \mathcal{N}(I)\colon A_{1}\prec J\}} \{r(J)\} \\
            A_{4} &= \argmax_{J\in \mathcal{N}(I)} \{l(J)\} \\
            A_{3} &= \argmax_{\{J\in \mathcal{N}(I) \colon J\prec A_{4}\}} \{l(J)\}
    \end{align*}
Note that $A_2$ and $A_3$ necessarily intersect (or are the same interval). Now, since the vertex associated to $I$ is a center of a claw that is not an $E$-claw, this means that at least one of $A_1$ or $A_4$ does not intersect an interval which is disjoint from $I$.
Thus, we can modify the representation so that $A_1$ (resp. $A_4$) is properly contained in $I$ without loosing any intersections, by simply stretching them. 
Then, if $A_1$ is properly contained in $I$, we define $I_1 = A_1$ and $I_2 = [l(A_3), r(I)]$. On the other hand, if $A_4$ is properly contained in $I$ instead, we define $I_1=[l(I), r(A_2)]$ and $I_2=A_4$. If both of them are properly contained in $I$, we can define $I_1$ and $I_2$ either way. 
\end{enumerate}
Notice that the 2-intervals introduced in this step have the same properties as in \cref{claim:properties}, so the proof of correctness of the previous algorithm can be directly adapted for this extension. 
\end{proof}

\section{Disjoint unit d-interval graphs}\label{SectionGeneral}

In this section, we prove that \cref{k12d+1-free are unit} cannot be generalized for disjoint unit $d$-interval graphs. Note that by \cref{corollary:E claws}, if we have a graph which does not contain any $E$-claws
, then the generalization still holds for disjoint unit $d$-interval graphs, but this is not the case in general.
Indeed, suppose there is an interval $I$ that intersects exactly three pairwise disjoint intervals, $A_1$, $A_2$ and $A_3$, and both $A_1$ and $A_3$ intersect each an interval disjoint from $I$. Then, the algorithm presented in the previous section would return a 2-interval $I_1 \cup I_2$, where $I_1$ and $I_2$ are not disjoint. If we try to extend the algorithm in the most natural way, that is, stretching these two intervals until they are disjoint, we would still not succeed. This is because, since $I_1$ and $I_2$ cannot intersect, then either $r(I_1)$ will be to the left of the right endpoint returned by the algorithm, or $l(I_2)$ will be to the right of the endpoint returned by the algorithm.
But then, one of $I_1$ or $I_2$ might not properly contain a complete interval from the original representation, which can cause $I_1$ or $I_2$ -- the interval which does not properly contain a complete original interval -- to be contained in an interval that intersects three pairwise disjoint intervals (see \cref{fig:max3claw}). 

\begin{figure}
    \centering
        \centering
        
    \begin{tikzpicture}[scale=0.4]

    \draw (-3,7) -- (3,7)  node[midway,above]{\scriptsize $8$};
    \draw (-10,5) -- (-2,5)  node[midway,above left] {\scriptsize $1$};
    \draw (-1,5) -- (1,5) node[midway,above] {\scriptsize $9$};
    \draw (2,5) -- (10,5) node[midway,above right] {\scriptsize $10$};
    \draw (-10,3) -- (-9,3) node[midway,above] {\scriptsize $2$};
    \draw (-8,3) -- (-7,3) node[midway,above] {\scriptsize $4$};
    \draw (-6,3) -- (-5,3) node[midway,above] {\scriptsize $6$};
    \draw (-6,4) -- (-4,4) node[midway,above] {\scriptsize $5$};
 \draw (9,3) -- (10,3) node[midway,above] {\scriptsize $14$};
    \draw (7,3) -- (8,3) node[midway,above] {\scriptsize $13$};
    \draw (5,3) -- (6,3) node[midway,above] {\scriptsize $12$};
    \draw (4,4) -- (6,4) node[midway,above] {\scriptsize $11$};

    \draw (-8,1) -- (4.5,1) node[midway,above left] {\scriptsize $3$};
    \draw (-4.5,0) -- (8,0) node[midway,above] {\scriptsize $7$};

    \draw[red] (-8,1.5) -- (-5,1.5) node[midway,above] {\scriptsize $3_1$};
        \draw[red] (-1,1.5) -- (4.5,1.5) node[midway,above right= 0.1 cm] {\scriptsize $3_2$};

    \draw[red] (-4.5,-1) -- (1,-1) node[midway,above] {\scriptsize $7_1$};
        \draw[red] (5,-1) -- (8,-1) node[midway,above] {\scriptsize $7_2$};
        
           \draw[red] (-3,8) -- (1,8)  node[midway,above]{\scriptsize $8_1$};
        \draw[red] (2,8) -- (3,8)  node[midway,above]{\scriptsize $8_2$};

    \draw[dashed] (3, 8) -- (3, 1.5);
    \draw[dashed] (1, 8) -- (1, 1.5);
    \draw[dashed] (4, 4) -- (4, 1.5);
    \draw[dashed] (2, 8) -- (2, 1.5);

     \draw[red] (-10,6) -- (-7,6)  node[midway,above] {\scriptsize $1_1$};
     \draw[red] (-6,6) -- (-2,6)  node[midway,above] {\scriptsize $1_2$};
    \draw[red] (2,6) -- (6,6) node[midway,above] {\scriptsize $10_1$};
        \draw[red] (7,6) -- (10,6) node[midway,above] {\scriptsize $10_2$};

    \end{tikzpicture}
    \caption{Interval representation of a $K_{1,5}$-free graph that cannot be turned into a disjoint unit 2-interval representation just by ``cutting'' intervals that intersect more than three pairwise disjoint intervals. In the figure, the intervals in red are all obtained using a natural extension of the algorithm. We can see that in this way, $3_2$ intersects three disjoint intervals: $8_1, 8_2, 11$. The reader can check that no other way of stretching the intervals works if $8_1$ and $8_2$ are required to be disjoint.}
    
    \label{fig:max3claw}
\end{figure}

In the following, we show that there is no way to extend the algorithm to make it work in the general case for disjoint unit $d$-interval graphs. In particular, we prove the following theorem.

\begin{theorem}\label{conj:roberts}
    There exists a $K_{1,5}$-free interval graph that is not a disjoint unit 2-interval graph.
\end{theorem}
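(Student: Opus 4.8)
The plan is to exhibit the explicit graph $G$ on $14$ vertices drawn in \cref{fig:max3claw} and to check the three required properties for it. That $G$ is an interval graph is immediate, since \cref{fig:max3claw} itself displays an interval representation. Checking that $G$ is $K_{1,5}$-free is a finite inspection of this representation: one verifies that no interval meets five pairwise disjoint intervals (equivalently, no vertex has five pairwise non-adjacent neighbours); the only vertices whose neighbourhood contains four pairwise disjoint intervals are $1$, $3$, $7$ and $10$, each being the centre of a maximal induced $K_{1,4}$, for instance $[1;2,4,6,8]$, $[3;4,6,9,11]$, $[7;5,9,11,13]$ and $[10;8,12,13,14]$, while vertex $8$ is the centre of the maximal induced $K_{1,3}$ on $\{1,9,10\}$. (Consistently with \cref{corollary:E claws}, $G$ does contain $E$-claws, e.g.\ the one induced by $\{2,1,8,10,14,9\}$; the role of the surrounding $K_{1,4}$'s is exactly to make the ``split'' of such a claw impossible to accommodate, which is what must be exploited.)

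The core of the argument is to assume, towards a contradiction, that $G$ admits a disjoint unit $2$-interval representation $\varphi$, assigning to each vertex $v$ two disjoint closed unit intervals $\varphi(v)=A_v\cup B_v$. The basic tool is the \emph{capacity lemma}: a single unit interval cannot meet three pairwise disjoint unit intervals. Hence a vertex $v$ that is the centre of an induced $K_{1,4}$ must have both of its intervals ``fully committed'': the four pairwise disjoint $2$-intervals of its leaves each meet $\varphi(v)$, so assigning each leaf to a half it meets, each of $A_v,B_v$ gets exactly two leaves; moreover, for a half $J=[x,x+1]$ with two assigned leaves, one witnessing interval meets $J$ from the left and the other from the right (both unit intervals meeting the unit interval $J$ and disjoint from each other force this), so the two leaves served by $J$ lie within distance one of each other and $J$ is sandwiched between them. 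This pins down a left-to-right pattern around every $K_{1,4}$-centre.

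I would then propagate these rigidities through the four $K_{1,4}$-centres $1,3,7,10$ and the maximal $K_{1,3}$-centre $8$, using the shared membership: $4$ and $6$ lie in $N(1)\cap N(3)$; $9\in N(3)\cap N(7)\cap N(8)$; $11\in N(3)\cap N(7)$; $8$ is a leaf of both $[1;\dots]$ and $[10;\dots]$ and a neighbour of $3$ and $7$; $1$ is a leaf of $[7;5,9,11,13]$ and a leaf of $[8;1,9,10]$. Since $8$ is the centre of a \emph{maximal} $K_{1,3}$, in any unit representation it needs both intervals $A_8,B_8$, and being disjoint these are two disjoint unit intervals; the three disjoint $2$-intervals $\varphi(1),\varphi(9),\varphi(10)$ must be distributed between them. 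Tracing how the long vertices $3$ and $7$ must meet, respectively, $9$ and one half of $8$ on one side and $11$ (resp.\ $1$ and the other half of $8$) on the other, while still serving all four leaves of their own $K_{1,4}$, the conclusion I aim at is that the half of $\varphi(3)$ serving leaf $9$ is forced to meet $A_8$, $B_8$ and the witnessing interval of $11$ — three pairwise disjoint unit intervals hit by one unit interval — contradicting the capacity lemma, with $7$ and $1$ playing symmetric roles. Since $G$ is a $K_{1,5}$-free interval graph, \cref{k12d+1-free are unit} gives that it is nonetheless a (non-disjoint) unit $2$-interval graph, so the two classes are separated.

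The main obstacle is the case analysis: each vertex carries two intervals, so at each $K_{1,4}$-centre both ``which half serves which pair of leaves'' and ``which of $A_v,B_v$ is the left one'' branch, and one must additionally rule out pathological placements where the two intervals of a single $2$-interval are far apart (these are eliminated by the adjacencies that keep both halves near the shared leaves). Organising the bookkeeping so that the long intervals $3$ and $7$ carry the argument — they are precisely the reason the forced split of the $K_{1,3}$ at $8$ cannot be ``pushed away'' — is what turns the heuristic from \cref{fig:max3claw} into a rigorous impossibility proof.
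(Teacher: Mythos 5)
You have chosen the right certificate graph (the interval representation in \cref{fig:max3claw} defines the same graph as \cref{fig:counterexample}), your verification that it is interval and $K_{1,5}$-free is sound, and your framework --- each half of a $K_{1,4}$-centre is a unit interval that can meet at most two pairwise disjoint unit intervals, hence serves exactly two leaves --- is essentially the paper's own setup, phrased geometrically instead of as the combinatorial ``split each vertex into two non-adjacent representatives so that the result is a $K_{1,3}$-free interval graph''. The problem is that everything after that is announced rather than proved. The phrases ``I would then propagate these rigidities'', ``the conclusion I aim at is'', and ``the main obstacle is the case analysis'' defer precisely the part of the argument that constitutes the entire proof. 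The paper's Lemma~\ref{lemma:certificate} spends several claims extracting, from the overlapping $4$-claws at vertices $1,3,7,10$, that certain triples of edge-representatives must be pairwise incident; from these it deduces that the split of vertex~$8$ is forced up to symmetry, that the split of vertex~$3$ is then forced up to symmetry, and only then that vertex~$7$ admits no legal split. None of these deductions appears in your proposal, and the paper itself stresses that this branching is the genuinely hard part (even computer search is nontrivial here), so it cannot be waved through.

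There is also a concrete reason to doubt your proposed endgame. You aim to derive the contradiction at vertex~$3$: that the half of $\varphi(3)$ serving leaf~$9$ must meet $A_8$, $B_8$ and a witnessing interval of~$11$. But in the paper's analysis vertex~$3$ \emph{does} admit a consistent split once vertex~$8$ is committed --- one representative of $3$ becomes adjacent to $7$, $9$, $10$, $11$ and one representative of $8$, and this creates no $3$-claw. The obstruction only materialises one step later, at vertex~$7$, after the splits of both $8$ and $3$ have been pinned down (the forced neighbours $8_2$, $3_1$ and $5$ of one representative of $7$ are pairwise non-adjacent). By the automorphism exchanging $3 \leftrightarrow 7$ and $1 \leftrightarrow 10$ you could of course arrange for the failure to appear at $3$ instead, but only after first committing the split of $7$; as written, your target contradiction is not reachable directly from the rigidity of $8$ and the $4$-claws alone. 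To complete the proof you need to carry out the full forced-propagation: unique split of $8$, then of one of $\{3,7\}$, then impossibility at the other.
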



To prove \cref{conj:roberts}, we offer the graph $G$ in~\cref{fig:counterexample} as a certificate.
The reader can check that $G$ has no induced $K_{1,5}$, and
an interval representation of $G$ is provided in~\cref{fig:representation}. 
The proof that $G$ is not a disjoint unit 2-interval graph is the challenging part. 
Indeed, checking whether a graph is disjoint unit 2-interval is a computationally expensive task, and even with the aid of computer search, a naive ILP implementation already takes too much time to consider an exhaustive search. 
Needless to say, checking manually by brute force leads to a very long branching process. The proof presented here is based on a careful analysis of the graph, and the technique employed (which uses the characterization of unit 2-interval graphs in [\cite{acceptedvirginia}, Lemma 5]) may be applied to establish that other graphs are not disjoint unit 2-interval graphs. We also verify the proof computationally, using an encoding in answer set programming based on the semiorder characterization of unit interval graphs, which proves to be way more efficient than an ILP encoding. Our code and experimental setting can be found on our git repository~\footnote{\url{https://github.com/AbdallahS/unit-graphs}}.
Furthermore, we provide five other $K_{1,5}$-free interval graphs on the same number of vertices (and with a very similar structure) that are not disjoint unit 2-interval (see \cref{pics:5counterexamples}). The proof that they are not disjoint unit 2-interval is omitted, but it is analogous to the one presented here. These six graphs are the only such graphs 
on 14 vertices, and there does not exist a graph satisfying the conditions of \cref{conj:roberts} with fewer vertices. These assertions were verified by computer search over all interval graphs of a given size without induced $K_{1,5}$'s~\cite{DBLP:journals/tcs/YamazakiSKU20}. 


\begin{toappendix}  


\begin{figure}[htp]
\begin{subfigure}[b]{0.5\textwidth}
\centering
   \resizebox{6cm}{3cm}{%

\begin{tikzpicture}[ node distance={15mm}, main/.style = {draw, circle}] 

\node[main] (8) {$8$}; 

\node[main] (9) [below of=8] {$9$}; 
\node[main] (1) [left of=9] {$1$};
\node[main] (10) [ right of=9] {$10$};
\node[main] (4) [below left of=1] {$4$}; 
\node[main] (13) [below right of=10] {$13$};
\node[main] (3) [below of=1] {$3$}; 
\node[main] (7) [below of=10] {$7$};
\node[main] (14) [ right of=13] {$14$};
\node[main] (12) [below of=14] {$12$};
\node[main] (11) [below of=13] {$11$};

\node[main] (2) [ left of=4] {$2$};
\node[main] (6) [below of=2] {$6$};
\node[main] (5) [below of=4] {$5$};

\draw (8) -- (9);
\draw (8) -- (1);
\draw (8) -- (10);
\draw (8) -- (3);
\draw (8) -- (7);
\draw (9) -- (3);
\draw (9) -- (7);
\draw (1) -- (2);
\draw (1) -- (3);
\draw (1) -- (4);
\draw (1) -- (7);
\draw (10) -- (3);
\draw (10) -- (7);
\draw (10) -- (13);
\draw (10) -- (11);
\draw (10)[out=330, in=110] to (12);
\draw (10) -- (14);
\draw (3) -- (7);
\draw (3) -- (5);
\draw (3) -- (6);
\draw (5) -- (6);

\draw (7) -- (11);
\draw (7) -- (12);
\draw (11) -- (12);
\draw (3) -- (4);
\draw (5) -- (7);
\draw (1) -- (5);
\draw (1)[out=210, in=70] to (6);
\draw (7) -- (13);
\draw (3) -- (11);

\draw[red, dashed, very thick] (11) -- (13);
\draw[red, dashed, very thick] (11) -- (14);
\draw[red, dashed, very thick] (5) -- (4);
\draw[red, dashed, very thick]  (5) -- (2);

\end{tikzpicture} 
   
}
\end{subfigure}
\quad
\begin{subfigure}[b]{0.5\textwidth}
\centering
\resizebox{6cm}{3cm}{%

\begin{tikzpicture}[node distance={15mm}, main/.style = {draw, circle}] 

\node[main] (8) {$8$}; 

\node[main] (9) [below of=8] {$9$}; 
\node[main] (1) [left of=9] {$1$};
\node[main] (10) [ right of=9] {$10$};
\node[main] (4) [below left of=1] {$4$}; 
\node[main] (13) [below right of=10] {$13$};
\node[main] (3) [below of=1] {$3$}; 
\node[main] (7) [below of=10] {$7$};
\node[main] (14) [ right of=13] {$14$};
\node[main] (12) [below of=14] {$12$};
\node[main] (11) [below of=13] {$11$};

\node[main] (2) [ left of=4] {$2$};
\node[main] (6) [below of=2] {$6$};
\node[main] (5) [below of=4] {$5$};

\draw (8) -- (9);
\draw (8) -- (1);
\draw (8) -- (10);
\draw (8) -- (3);
\draw (8) -- (7);
\draw (9) -- (3);
\draw (9) -- (7);
\draw (1) -- (2);
\draw (1) -- (3);
\draw (1) -- (4);
\draw (1) -- (7);
\draw (10) -- (3);
\draw (10) -- (7);
\draw (10) -- (13);
\draw (10) -- (11);
\draw (10)[out=330, in=110] to (12);
\draw (10) -- (14);
\draw (3) -- (7);
\draw (3) -- (5);
\draw (3) -- (6);
\draw (5) -- (6);

\draw (7) -- (11);
\draw (7) -- (12);
\draw (11) -- (12);
\draw (3) -- (4);
\draw (5) -- (7);
\draw (1) -- (5);
\draw (1)[out=210, in=70] to (6);
\draw (7) -- (13);
\draw (3) -- (11);

\draw[red, dashed, very thick] (11) -- (13);
\draw[red, dashed, very thick] (5) -- (4);
\draw[red, dashed, very thick]  (5) -- (2);

\end{tikzpicture} 
   
}
\end{subfigure}

\medskip
\begin{subfigure}[b]{0.5\textwidth}
\resizebox{6cm}{3cm}{%
\begin{tikzpicture}[node distance={15mm}, main/.style = {draw, circle}] 

\node[main] (8) {$8$}; 

\node[main] (9) [below of=8] {$9$}; 
\node[main] (1) [left of=9] {$1$};
\node[main] (10) [ right of=9] {$10$};
\node[main] (4) [below left of=1] {$4$}; 
\node[main] (13) [below right of=10] {$13$};
\node[main] (3) [below of=1] {$3$}; 
\node[main] (7) [below of=10] {$7$};
\node[main] (14) [ right of=13] {$14$};
\node[main] (12) [below of=14] {$12$};
\node[main] (11) [below of=13] {$11$};

\node[main] (2) [ left of=4] {$2$};
\node[main] (6) [below of=2] {$6$};
\node[main] (5) [below of=4] {$5$};

\draw (8) -- (9);
\draw (8) -- (1);
\draw (8) -- (10);
\draw (8) -- (3);
\draw (8) -- (7);
\draw (9) -- (3);
\draw (9) -- (7);
\draw (1) -- (2);
\draw (1) -- (3);
\draw (1) -- (4);
\draw (1) -- (7);
\draw (10) -- (3);
\draw (10) -- (7);
\draw (10) -- (13);
\draw (10) -- (11);
\draw (10)[out=330, in=110] to (12);
\draw (10) -- (14);
\draw (3) -- (7);
\draw (3) -- (5);
\draw (3) -- (6);
\draw (5) -- (6);

\draw (7) -- (11);
\draw (7) -- (12);
\draw (11) -- (12);
\draw (3) -- (4);
\draw (5) -- (7);
\draw (1) -- (5);
\draw (1)[out=210, in=70] to (6);
\draw (7) -- (13);
\draw (3) -- (11);

\draw[red, dashed, very thick] (5) -- (4);
\draw[red, dashed, very thick]  (5) -- (2);

\end{tikzpicture} 
   
}
\end{subfigure}
\quad
\begin{subfigure}[b]{0.5\textwidth}
\resizebox{6cm}{3cm}{%

\begin{tikzpicture}[node distance={15mm}, main/.style = {draw, circle}] 

\node[main] (8) {$8$}; 

\node[main] (9) [below of=8] {$9$}; 
\node[main] (1) [left of=9] {$1$};
\node[main] (10) [ right of=9] {$10$};
\node[main] (4) [below left of=1] {$4$}; 
\node[main] (13) [below right of=10] {$13$};
\node[main] (3) [below of=1] {$3$}; 
\node[main] (7) [below of=10] {$7$};
\node[main] (14) [ right of=13] {$14$};
\node[main] (12) [below of=14] {$12$};
\node[main] (11) [below of=13] {$11$};

\node[main] (2) [ left of=4] {$2$};
\node[main] (6) [below of=2] {$6$};
\node[main] (5) [below of=4] {$5$};

\draw (8) -- (9);
\draw (8) -- (1);
\draw (8) -- (10);
\draw (8) -- (3);
\draw (8) -- (7);
\draw (9) -- (3);
\draw (9) -- (7);
\draw (1) -- (2);
\draw (1) -- (3);
\draw (1) -- (4);
\draw (1) -- (7);
\draw (10) -- (3);
\draw (10) -- (7);
\draw (10) -- (13);
\draw (10) -- (11);
\draw (10)[out=330, in=110] to (12);
\draw (10) -- (14);
\draw (3) -- (7);
\draw (3) -- (5);
\draw (3) -- (6);
\draw (5) -- (6);

\draw (7) -- (11);
\draw (7) -- (12);
\draw (11) -- (12);
\draw (3) -- (4);
\draw (5) -- (7);
\draw (1) -- (5);
\draw (1)[out=210, in=70] to (6);
\draw (7) -- (13);
\draw (3) -- (11);

\draw[red, dashed, very thick] (11) -- (13);
\draw[red, dashed, very thick] (5) -- (4);

\end{tikzpicture} 
   
}
\end{subfigure}\quad

\medskip
\centering
\begin{subfigure}[b]{0.5\textwidth}
\resizebox{6cm}{3cm}{%

\begin{tikzpicture}[node distance={15mm}, main/.style = {draw, circle}] 

\node[main] (8) {$8$}; 

\node[main] (9) [below of=8] {$9$}; 
\node[main] (1) [left of=9] {$1$};
\node[main] (10) [ right of=9] {$10$};
\node[main] (4) [below left of=1] {$4$}; 
\node[main] (13) [below right of=10] {$13$};
\node[main] (3) [below of=1] {$3$}; 
\node[main] (7) [below of=10] {$7$};
\node[main] (14) [ right of=13] {$14$};
\node[main] (12) [below of=14] {$12$};
\node[main] (11) [below of=13] {$11$};

\node[main] (2) [ left of=4] {$2$};
\node[main] (6) [below of=2] {$6$};
\node[main] (5) [below of=4] {$5$};

\draw (8) -- (9);
\draw (8) -- (1);
\draw (8) -- (10);
\draw (8) -- (3);
\draw (8) -- (7);
\draw (9) -- (3);
\draw (9) -- (7);
\draw (1) -- (2);
\draw (1) -- (3);
\draw (1) -- (4);
\draw (1) -- (7);
\draw (10) -- (3);
\draw (10) -- (7);
\draw (10) -- (13);
\draw (10) -- (11);
\draw (10)[out=330, in=110] to (12);
\draw (10) -- (14);
\draw (3) -- (7);
\draw (3) -- (5);
\draw (3) -- (6);
\draw (5) -- (6);

\draw (7) -- (11);
\draw (7) -- (12);
\draw (11) -- (12);
\draw (3) -- (4);
\draw (5) -- (7);
\draw (1) -- (5);
\draw (1)[out=210, in=70] to (6);
\draw (7) -- (13);
\draw (3) -- (11);

\draw[red, dashed, very thick] (5) -- (4);

\end{tikzpicture} 
   
}
\end{subfigure}

\caption{The five other graphs on 14 vertices which are interval and $K_{1,5}$-free but not disjoint unit 2-interval. In dashed red, the edges that differ from the graph that we analyze here.}
\label{pics:5counterexamples}
\end{figure}
\end{toappendix}

\begin{figure}[htp]
    \centering
        \centering

\begin{tikzpicture}[scale=0.65,transform shape,node distance={15mm}, main/.style = {draw, circle}] 

\node[main] (8) {$8$}; 

\node[main] (9) [below of=8] {$9$}; 
\node[main] (1) [left of=9] {$1$};
\node[main] (10) [ right of=9] {$10$};
\node[main] (4) [below left of=1] {$4$}; 
\node[main] (13) [below right of=10] {$13$};
\node[main] (3) [below of=1] {$3$}; 
\node[main] (7) [below of=10] {$7$};
\node[main] (14) [ right of=13] {$14$};
\node[main] (12) [below of=14] {$12$};
\node[main] (11) [below of=13] {$11$};

\node[main] (2) [ left of=4] {$2$};
\node[main] (6) [below of=2] {$6$};
\node[main] (5) [below of=4] {$5$};

\draw (8) -- (9);
\draw (8) -- (1);
\draw (8) -- (10);
\draw (8) -- (3);
\draw (8) -- (7);
\draw (9) -- (3);
\draw (9) -- (7);
\draw (1) -- (2);
\draw (1) -- (3);
\draw (1) -- (4);
\draw (1) -- (7);
\draw (10) -- (3);
\draw (10) -- (7);
\draw (10) -- (13);
\draw (10) -- (11);
\draw (10)[out=330, in=110] to (12);
\draw (10) -- (14);
\draw (3) -- (7);
\draw (3) -- (5);
\draw (3) -- (6);
\draw (5) -- (6);

\draw (7) -- (11);
\draw (7) -- (12);
\draw (11) -- (12);
\draw (3) -- (4);
\draw (5) -- (7);
\draw (1) -- (5);
\draw (1)[out=210, in=70] to (6);
\draw (7) -- (13);
\draw (3) -- (11);

\end{tikzpicture} 
   
    \caption{One of the 6 graphs with 14 vertices (the one with the fewest edges) which is an interval graph (see \cref{fig:representation}) and $K_{1,5}$-free, but not disjoint unit 2-interval.}
    \label{fig:counterexample}
\end{figure}

\begin{figure}[htp]
    \centering
        \centering
        
    \begin{tikzpicture}[scale=0.42]

    \draw (-3,6) -- (3,6)  node[midway,above]{\small $8$};
    \draw (-10,5) -- (-2,5)  node[midway,above] {\small $1$};
    \draw (-1,5) -- (1,5) node[midway,above] {\small $9$};
    \draw (2,5) -- (10,5) node[midway,above] {\small $10$};
    \draw (-10,3) -- (-9,3) node[midway,above] {\small $2$};
    \draw (-8,3) -- (-7,3) node[midway,above] {\small $4$};
    \draw (-6,3) -- (-5,3) node[midway,above] {\small $6$};
    \draw (-6,4) -- (-4,4) node[midway,above] {\small $5$};
 \draw (9,3) -- (10,3) node[midway,above] {\small $14$};
    \draw (7,3) -- (8,3) node[midway,above] {\small $13$};
    \draw (5,3) -- (6,3) node[midway,above] {\small $12$};
    \draw (4,4) -- (6,4) node[midway,above] {\small $11$};

    \draw (-8,2) -- (4.5,2) node[midway,above] {\small $3$};
    \draw (-4.5,1) -- (8,1) node[midway,above] {\small $7$};
    

    \end{tikzpicture}
    \caption{An interval representation of the graph in \cref{fig:counterexample}.}
    \label{fig:representation}
\end{figure}

\cref{conj:roberts} follows directly from the next lemma.

\begin{lemmarep}\label{lemma:certificate}
    The graph $G$ in~\cref{fig:counterexample} is not a disjoint unit 2-interval graph.
\end{lemmarep}

\begin{proof}
To prove that $G$ is not a disjoint unit 2-interval graph, we need to show that there is no way to obtain a unit interval graph $G'$ from $G$ by splitting some or all of the vertices $v$ of $G$ into two non-adjacent vertices $v_1$ and $v_2$, called the \emph{representatives} of $v$, 
so that $(u,v) \in E(G)$ if and only if $(u_i,v_j) \in E(G')$ for some $i,j \in \{1,2\}$. The edges $(u_i,v_j)$ in $E(G')$ are then called the \emph{representatives} of edge $(u,v) \in E(G)$, and $G'$ is called a \emph{solution}. Furthermore, we say that a solution is \emph{canonical} if for every vertex~$v$ that has been split into $v_1$ and $v_2$, none of the following hold:
\begin{itemize}
    \item $v_i$ is isolated in $G'$ (if this happened, we could remove $v_i$ from $G'$ and end up with another solution with fewer split vertices).
    \item $v_i$ is adjacent to $u_1$ or $u_2$ for every $(u,v) \in E(G)$ (in this case, we could remove vertex~$v_{3-i}$ from $G'$ and end up with another solution with fewer split vertices).
\end{itemize}
Note that if there exists a solution, then a canonical solution also exists. We now list some properties that a solution $G'$ must satisfy:

\begin{observation}
\begin{itemize}
    \item No edge $(u,v) \in E(G)$ can have 4 representatives in $G'$ since otherwise the subgraph induced by $(u_1,v_1,u_2,v_2)$ would be a $C_4$ in $G'$, contradicting the fact that $G'$ is an interval graph. 
    \item Every edge $(u,v)\in E(G)$ that is part of a 4-claw of $G$ must have precisely one representative in $G'$, since $G'$ has no 3-claw. 
    \item Every edge $(u,v)\in E(G)$ that is part of a 3-claw of $G$ has at most two representatives.
\end{itemize}
\end{observation}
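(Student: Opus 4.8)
The proof plan is to exploit two structural restrictions that any solution $G'$ must satisfy: since $G'$ is a unit interval graph, Roberts' theorem guarantees that it is \emph{$K_{1,3}$-free}, and since it is an interval graph (hence chordal) it contains no induced \emph{$C_4$}. Throughout, recall that each vertex $v$ of $G$ contributes either one representative (if it is not split) or two non-adjacent representatives $v_1,v_2$ to $G'$, and that the representatives of an edge $(u,v)\in E(G)$ are exactly those pairs $(u_i,v_j)$ that are edges of $G'$; in particular, an edge of $G$ has at least one representative. The first item is then immediate: an edge with four representatives forces both endpoints to be split, so that $u_1,u_2,v_1,v_2$ are four vertices of $G'$ in which the same-vertex pairs $u_1u_2$ and $v_1v_2$ are non-edges while all four cross pairs are edges. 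This induced subgraph is precisely a $C_4$, which cannot occur in the interval graph $G'$.

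For the second item I would fix an induced $K_{1,4}=[c;\ell_1,\ell_2,\ell_3,\ell_4]$ of $G$ and analyse how its center is represented. For $i\in\{1,2\}$ let $S_i$ be the set of leaves having a representative adjacent to $c_i$. Since the leaves are pairwise non-adjacent, so are their representatives, and hence if $c_i$ were adjacent to representatives of three distinct leaves these would form a $K_{1,3}$ centered at $c_i$; thus $|S_i|\le 2$. As every leaf is adjacent to $c$, we have $S_1\cup S_2=\{\ell_1,\ell_2,\ell_3,\ell_4\}$, which forces $c$ to be split and $|S_1|=|S_2|=2$ with $S_1\cap S_2=\emptyset$. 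Now take any spoke $(c,\ell)$ of the claw, say with $\ell\in S_1$. Because $\ell\notin S_2$, no representative of $\ell$ is adjacent to $c_2$; and because $c_1$ is already adjacent to a representative of the other leaf in $S_1$, it can be adjacent to at most one representative of $\ell$ without completing a $K_{1,3}$. Hence exactly one pair $(c_1,\ell_j)$ is an edge of $G'$, giving precisely one representative.

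The third item is the most delicate, and is where I expect the main effort to lie: I must rule out an edge of a $3$-claw having exactly three representatives, four being already excluded by the first item. Fixing $[c;\ell_1,\ell_2,\ell_3]$ and the spoke $(c,\ell_1)$, three representatives force both $c$ and $\ell_1$ to be split and one representative of $c$, say $c_1$, to be adjacent to \emph{both} representatives of $\ell_1$. Reusing the $S_i$ analysis, $c_1$ can then be adjacent to no representative of any further leaf (that would complete a $K_{1,3}$), so $\ell_2,\ell_3$ must be adjacent only to $c_2$; but the three representatives also require $c_2$ to be adjacent to a representative of $\ell_1$, whence $c_2$ sees representatives of all three pairwise non-adjacent leaves $\ell_1,\ell_2,\ell_3$, again a forbidden $K_{1,3}$. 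This contradiction leaves at most two representatives.

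The routine part of the argument is the bookkeeping for endpoints that are \emph{not} split, for which the number of representatives of an incident edge is automatically at most two, so that only the fully split configurations need the $K_{1,3}$ reasoning above. The genuine obstacle is organizing the case analysis of the third item so that every way of distributing the three surviving representatives between $c_1$ and $c_2$ is reduced — using the symmetry $c_1\leftrightarrow c_2$ — to the single configuration in which one representative of $c$ is adjacent to both representatives of $\ell_1$; once that reduction is made, the $K_{1,3}$-freeness of $G'$ closes the argument cleanly.
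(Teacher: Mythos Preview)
Your argument is correct and matches the paper's intent. In the paper this statement is presented as an \emph{Observation} with only the parenthetical hints ``since otherwise $\ldots$ would be a $C_4$'' and ``since $G'$ has no $3$-claw'', and no further proof; you have supplied precisely the natural details those hints point to, namely chordality (no induced $C_4$) for the first item and Roberts' $K_{1,3}$-freeness plus a pigeonhole split of the leaves between $c_1$ and $c_2$ for the second and third.
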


After these preliminary observations, we proceed to the detailed analysis of the graph $G$ presented in~\cref{fig:counterexample} and we show that it does not have a canonical solution. We begin with three observations that follow directly from the structure of $G$.

\begin{observation}
\begin{itemize}
    \item No leaf vertex (vertex of degree 1) is split by a canonical solution. Hence, vertices~2 and~14 are not split in a canonical solution.
    \item The centers of an induced 3-claw in $G$ must be necessarily split in every solution. Therefore, if there exists a solution for $G$, then the vertices $1,3,7,8$ and $10$ must all be split.
    \item The graph $G$ has a unique non-trivial automorphism $\pi$, which is an involution and described by Table~1 (and it corresponds to the symmetry with respect to the y-axis in \cref{fig:counterexample}).    
    \end{itemize}
\end{observation}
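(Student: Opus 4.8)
The three parts of this observation are proved independently; the first two follow directly from the definitions of \emph{representative} and \emph{canonical solution} together with Roberts' characterization, while the third is a finite structural check on $G$.

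For the first bullet, I would argue by contradiction straight from the canonicity conditions. Suppose a leaf $v$ with unique neighbour $u$ were split into $v_1,v_2$ in a canonical solution $G'$. Since $(u,v)$ is the only edge of $G$ incident to $v$, the only vertices that any $v_i$ can be adjacent to in $G'$ are representatives of $u$. Hence, for a fixed $i$, the condition ``$v_i$ is adjacent to $u_1$ or $u_2$ for every $(u,v)\in E(G)$'' is literally the negation of ``$v_i$ is isolated in $G'$'': exactly one of the two must hold. But canonicity forbids \emph{both}, so it cannot be satisfied — a contradiction. Therefore no leaf is split, and in particular vertices $2$ and $14$, being of degree $1$ in $G$, are not split.

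For the second bullet, I would invoke the fact that $G'$, being a unit interval graph, is $K_{1,3}$-free by Roberts' characterization. Let $c$ be the centre of an induced $3$-claw with leaves $a,b,d$, and assume $c$ is not split, so $c$ is its own unique representative in $G'$. Each edge $(c,a),(c,b),(c,d)$ forces a representative $a',b',d'$ adjacent to $c$ in $G'$, and since $a,b,d$ are pairwise non-adjacent in $G$, their representatives are pairwise non-adjacent in $G'$ by the definition of representatives. Then $\{c,a',b',d'\}$ induces a $K_{1,3}$ in $G'$, which is impossible. Hence every centre of an induced $3$-claw must be split; exhibiting the claws $[1;2,4,8]$, $[8;1,9,10]$, $[10;8,13,14]$, $[3;4,8,11]$ and $[7;1,9,13]$, which one reads off directly from the adjacencies, shows that $1,3,7,8,10$ must all be split.

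For the third bullet, the plan is a degree-based propagation argument. First I would record the degree sequence, which partitions $V(G)$ into the classes $\{2,14\},\{4,13\},\{6,9,12\},\{5,11\},\{8\},\{1,10\},\{3,7\}$ (degrees $1,2,3,4,5,7,9$). Any automorphism preserves these classes, so the unique degree-$5$ vertex $8$ is fixed, and the unique degree-$3$ neighbour of $8$, namely $9$, is fixed. The core step is to show that any automorphism $\sigma$ fixing vertex $1$ must be the identity: since $2$ is the only leaf adjacent to $1$ it is fixed, which forces $10$ (the other degree-$7$ vertex) and then its leaf $14$ to be fixed; the degree-$2$ vertex $4$ is the only member of $\{4,13\}$ adjacent to the fixed vertex $1$, so $4$ and hence $13$ are fixed, which in turn pins $3$, then $7$, and finally $6,12,5,11$ by the same ``unique neighbour of an already-fixed vertex'' reasoning. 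Granting this, an arbitrary automorphism $\sigma$ satisfies either $\sigma(1)=1$, giving $\sigma=\mathrm{id}$, or $\sigma(1)=10$, in which case $\pi\circ\sigma$ fixes $1$ and is therefore the identity, so $\sigma=\pi$; thus $\mathrm{Aut}(G)=\{\mathrm{id},\pi\}$. It remains to verify that the involution $\pi$ given by $1\leftrightarrow 10$, $2\leftrightarrow 14$, $3\leftrightarrow 7$, $4\leftrightarrow 13$, $5\leftrightarrow 11$, $6\leftrightarrow 12$ (fixing $8$ and $9$) really is an automorphism, a routine check that each edge maps to an edge. The main obstacle is the bookkeeping in this last part: vertices $3$ and $7$ have identical neighbourhood degree-multisets, so they can only be separated through their adjacencies to vertices already shown to be fixed, which means the \emph{order} in which vertices are fixed is what makes the propagation go through.
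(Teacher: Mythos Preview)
Your argument is correct and fills in precisely what the paper leaves implicit: the paper records these three facts as an observation without proof, relying on the reader to verify them directly from the definitions and the edge list of $G$. Your use of the two canonicity clauses for the leaf case, Roberts' $K_{1,3}$-freeness for the 3-claw centres, and the degree-class propagation for $\mathrm{Aut}(G)$ is the natural route, and the concrete 3-claws and degree computations you exhibit all check out against the adjacencies of $G$.
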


    \begin{table}[h!]
    \mbox{\ } \hspace{1cm}
       \begin{tabularx}{0.85\textwidth}{|X||X|X|X|X|X|X|X|X|X|X|X|X|X|X|X|X|X|X|}
       \hline
          $i$ &  1 & 2 & 3 & 4 & 5 & 6 & 7 & 8 & 9 & 10 & 11 & 12 & 13 & 14 \\ \hline
          $\pi(i)$ & 10 & 14 & 7 & 13 & 11 & 12 & 3 & 8 & 9 & 1 & 5 & 6 & 4 & 2 \\
          \hline
       \end{tabularx} 
       \medskip
       \caption{The unique non-trivial automorphism of $G$.}
    \end{table}

Next, we observe that the 4-claws offer particularly strong constraints. Indeed, if a vertex~$v$ is the center of a 4-claw $[v;a,b,c,d]$ and we split it into two vertices $v_1$ and $v_2$, then there are precisely two vertices in ${a,b,c,d}$, say~$a$ and~$b$, such that $(v_1,a)$ and $(v_1,b)$ are the edges representing $(v,a)$ and $(v,b)$, whereas $(v_2,c)$ and $(v_2,d)$ are the edges representing $(v,c)$ and $(v,d)$.
In the following claims we examine and list some of the constraints on the possibilities when splitting the vertices of $G$.

\begin{claim}\label{node1}
Vertex~1 imposes the following constraints:
\begin{enumerate} 
    \item There exist three indices $i,j,k$ ($i,j,k\in\{1,2\}$) such that:
    \begin{itemize}
        \item \e{1_i}{7_j} is the unique representative of edge \e{1}{7},
        \item \e{1_i}{8_k} is the unique representative of edge \e{1}{8} and
        \item \e{7_j}{8_k} is the unique representative of edge \e{7}{8}.
    \end{itemize}
    
    \item There exist three indices $i,j,k$ ($i,j,k\in\{1,2\}$) such that:
    \begin{itemize}
        \item \e{1_i}{5_j} is the unique representative of edge \e{1}{5}
        \item \e{1_i}{6_k} is the unique representative of edge \e{1}{6} and
        \item \e{5_j}{6_k} is a representative of edge \e{5}{6}.
    \end{itemize}
\end{enumerate}
\end{claim}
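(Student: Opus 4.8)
The plan is to work with a canonical solution $G'$ (in the sense just defined) and to exploit the $4$-claws centred at vertex $1$, together with the two facts recorded in the preceding observation: in $G'$ every $4$-claw edge has a unique representative, and $G'$, being a unit interval graph, is $K_{1,3}$-free. First I would record the local structure of $G$ at vertex $1$: $N(1)=\{2,3,4,5,6,7,8\}$, and the maximal independent sets of $G[N(1)]$ are exactly $\{2,4,5,8\}$, $\{2,4,6,7\}$ and $\{2,4,6,8\}$, so the $4$-claws with centre $1$ are precisely $[1;2,4,5,8]$, $[1;2,4,6,7]$ and $[1;2,4,6,8]$. In particular the edges $(1,5),(1,6),(1,7),(1,8)$ each lie in a $4$-claw and therefore have a unique representative, and so does $(7,8)$, since it lies in the $4$-claw $[7;5,8,11,13]$; by contrast $(5,6)$ lies in no claw (neither $5$ nor $6$ is a claw centre), which is exactly why part~2 only asserts that $(5_j,6_k)$ is \emph{a} representative rather than the unique one.

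The second step is to show that the unique representatives of $(1,7)$ and $(1,8)$ use the \emph{same} copy of vertex~$1$ (and likewise for $(1,5)$ and $(1,6)$). The point is that each $4$-claw centred at $1$ is split $2$–$2$ between the copies $1_1$ and $1_2$, and the three $4$-claws share the three leaves $\{2,4,6\}$, whose assignment to $1_1$ or $1_2$ is prescribed once and for all by the unique representatives of $(1,2),(1,4),(1,6)$; being of odd size, this set splits as $\{x\}$ versus $\{2,4,6\}\setminus\{x\}$, so for the $2$–$2$ split to hold both in $[1;2,4,6,7]$ and in $[1;2,4,6,8]$ the leaves $7$ and $8$ must each join the copy of $1$ containing $x$. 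A short bookkeeping argument with the remaining claw $[1;2,4,5,8]$ then likewise forces the representatives of $(1,5)$ and $(1,6)$ onto a common copy of~$1$. I expect this to be the main obstacle: one has to be careful that the assignment of $\{2,4,6\}$ is genuinely independent of which claw is considered, and to use its odd parity correctly; everything else is routine.

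The third step closes both parts via $K_{1,3}$-freeness of $G'$. Write $(1_i,7_j)$ and $(1_i,8_k)$ for the unique representatives of $(1,7)$ and $(1,8)$. In the $4$-claw $[1;2,4,6,7]$, since $7$ is on the side of $1_i$, exactly one $w\in\{2,4,6\}$ is also on that side, and since $(1,w)$ has a unique representative this forces $1_i$ to be adjacent in $G'$ to a copy $w'$ of $w$. None of $2,4,6$ is adjacent to $7$ or to $8$ in $G$, so $w'\not\sim 7_j$ and $w'\not\sim 8_k$; hence if $7_j\not\sim 8_k$ the set $\{1_i;w',7_j,8_k\}$ would induce a $K_{1,3}$ in $G'$, which is impossible. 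Therefore $7_j\sim 8_k$, and since $(7,8)$ has a unique representative it must equal $(7_j,8_k)$, which is part~1. Part~2 is identical: with $(1_i,5_j),(1_i,6_k)$ the unique representatives of $(1,5),(1,6)$, the claw $[1;2,4,5,8]$ puts some $w\in\{2,4,8\}$ on $1_i$'s side, none of $2,4,8$ is adjacent to $5$ or $6$ in $G$, and the same $K_{1,3}$ argument yields $5_j\sim 6_k$, i.e.\ $(5_j,6_k)$ is a representative of $(5,6)$.
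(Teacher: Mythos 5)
Your proof is correct and follows essentially the same route as the paper: it identifies the same three $4$-claws centred at vertex~$1$, forces the unique representatives of $(1,7)$ and $(1,8)$ (resp.\ $(1,5)$ and $(1,6)$) onto a common copy of~$1$ by a pigeonhole/parity argument on the shared leaves, and then uses $K_{1,3}$-freeness of $G'$ to force $7_j \sim 8_k$ (resp.\ $5_j \sim 6_k$). One small slip: the three $4$-claws do not all share the leaves $\{2,4,6\}$ (the claw $[1;2,4,5,8]$ contains $5$, not $6$); only $[1;2,4,6,7]$ and $[1;2,4,6,8]$ do, which is in fact all your argument actually uses, so nothing breaks.
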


\begin{claimproof}
Vertex~1 is the center of three different 4-claws: $A= [1;2,4,6,7], B= [1;2,4,6,8]$ and $ C= [1;2,4,5,8]$.
The first point of the claim follows from comparing claws $A$ and $B$. Towards a contradiction, assume without loss of generality that the representatives of $(1, 7)$ and $(1, 8)$ are incident to vertices $1_1$ and $1_2$, respectively. Then, by the pigeonhole principle, it is impossible to place the remaining representatives of the edges $(1,2)$, \e{1}{4} and \e{1}{6} without creating a 3-claw with center either $1_1$ or $1_2$ (we can have at most one of them incident to each of the vertices, and we have three edges and two vertices).
Furthermore, if they were both incident to $1_1$ (w.l.o.g.) but there was no representative of edge \e{7}{8} incident to $1_1$, then it would still be impossible to place the remaining edges without creating a claw. 
The proof of the second point of the claim follows similarly by comparing claws $B$ and $C$, but now the representative of \e{5}{6} is not necessarily unique because neither vertex is the center of a 4-claw.
\end{claimproof}

\begin{claim}\label{node10}
Vertex~10 imposes the symmetric constraints of vertex~1.
\end{claim}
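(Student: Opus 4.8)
The statement to prove is \cref{node10}: \emph{Vertex~10 imposes the symmetric constraints of vertex~1.}

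\textbf{Approach.} The plan is to invoke the automorphism $\pi$ of $G$ recorded in Table~1. Recall from the preceding observation that $\pi$ sends $1 \mapsto 10$, $7 \mapsto 3$, $8 \mapsto 8$, $5 \mapsto 11$, $6 \mapsto 12$, $2 \mapsto 14$, and $4 \mapsto 13$, and that it is an involution. The key point is that any canonical solution $G'$ for $G$ can be transported through $\pi$: if we split each vertex $v$ into representatives $v_1, v_2$ and obtain a unit interval graph $G'$, then relabelling every representative $v_i$ as $\pi(v)_i$ yields another valid splitting whose associated graph is isomorphic to $G'$ (via $\pi$ extended to representatives), hence again a unit interval graph. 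Therefore the structural constraints forced on the representatives of edges incident to vertex~1 are forced, verbatim under $\pi$, on the representatives of edges incident to vertex~$\pi(1) = 10$.

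\textbf{Key steps.} First I would note that $\pi$ maps the three 4-claws centered at vertex~1, namely $A = [1;2,4,6,7]$, $B = [1;2,4,6,8]$, $C = [1;2,4,5,8]$, to three 4-claws centered at vertex~10: $\pi(A) = [10;14,13,12,3]$, $\pi(B) = [10;14,13,12,8]$, $\pi(C) = [10;14,13,11,8]$. (One should check these are indeed induced 4-claws of $G$ from \cref{fig:counterexample} — they are, since $\pi$ is an automorphism and $A,B,C$ are.) Second, I would observe that the edge $(5,6)$ used in the second point of \cref{node1} maps to the edge $(\pi(5),\pi(6)) = (11,12)$, which is indeed an edge of $G$, and that neither $11$ nor $12$ is the center of a 4-claw (again, because neither $5$ nor $6$ is, and $\pi$ preserves this). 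Third, I would run the exact same pigeonhole argument as in the proof of \cref{node1}, now applied to $\pi(A)$ versus $\pi(B)$ and to $\pi(B)$ versus $\pi(C)$, to conclude: there exist indices $i,j,k \in \{1,2\}$ such that $(10_i,3_j)$ is the unique representative of $(10,3)$, $(10_i,8_k)$ is the unique representative of $(10,8)$, and $(3_j,8_k)$ is the unique representative of $(3,8)$; and symmetrically there exist $i,j,k \in \{1,2\}$ with $(10_i,11_j)$ the unique representative of $(10,11)$, $(10_i,12_k)$ the unique representative of $(10,12)$, and $(11_j,12_k)$ a representative of $(11,12)$. Since this is literally the image under $\pi$ of the assertion of \cref{node1}, the claim follows.

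\textbf{Main obstacle.} There is no real obstacle here: the proof is a routine symmetry argument, and the only thing to be careful about is bookkeeping — making sure that the image claws $\pi(A),\pi(B),\pi(C)$ are correctly computed from Table~1 and are genuinely induced 4-claws in $G$, and that the roles of ``unique representative'' (forced by membership in a 4-claw) versus ``a representative'' (for the edge $(11,12)$ between two non-4-claw-centers) are matched up correctly with the statement of \cref{node1}. The actual combinatorial content — the pigeonhole argument on placing three edge-representatives among two vertices without creating a 3-claw — is identical to what has already been carried out and need not be repeated.
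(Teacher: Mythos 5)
Your proposal is correct and matches the paper's approach: the paper likewise disposes of this claim by noting it is analogous to the proof for vertex~1, comparing the symmetric 4-claws (i.e., the images under the automorphism $\pi$ of Table~1). You simply spell out the transport-by-automorphism argument in more detail than the paper does, and your computations of $\pi(A)$, $\pi(B)$, $\pi(C)$ and of the image edges are all accurate.
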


\begin{claimproof}
        The proof is analogous to the previous one, comparing the symmetric 4-claws.
\end{claimproof}

\begin{claim}\label{node3}
Vertex~3 imposes the following constraints:
\begin{enumerate} 
    \item There exist three indices $i,j,k$ ($i,j,k\in\{1,2\}$) such that:
    \begin{itemize}
        \item \e{3_i}{5_j} is the unique representative of edge \e{3}{5},
        \item \e{3_i}{6_k} is the unique representative of edge \e{3}{6} and
        \item \e{5_j}{6_k} is a representative of edge \e{5}{6}.
    \end{itemize}
      
    \item There exist three indices $i,j,k$ ($i,j,k\in\{1,2\}$) such that:
    \begin{itemize}
        \item \e{3_i}{10_j} is the unique representative of edge \e{3}{10},
        \item \e{3_i}{11_k} is the unique representative of edge \e{3}{11}, and
        \item \e{10_j}{11_k} is the unique representative of edge \e{10}{11}.
    \end{itemize}
     \item There exist three indices $i,j,k$ ($i,j,k\in\{1,2\}$) such that:
    \begin{itemize}
        \item \e{3_i}{8_j} is the unique representative of edge \e{3}{8},
        \item \e{3_i}{9_k} is the unique representative of edge \e{3}{9}, and
        \item \e{8_j}{9_k} is a representative of edge \e{8}{9}.
    \end{itemize}
\end{enumerate}
\end{claim}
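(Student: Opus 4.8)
The plan is to mimic the proof of \cref{node1} almost word for word; the only new content is the choice of the $4$-claws centred at vertex~$3$. Since $N(3)=\{1,4,5,6,7,8,9,10,11\}$, a direct check of the adjacencies inside $N(3)$ shows that the induced $4$-claws with centre~$3$ are exactly $[3;4,5,8,11]$, $[3;4,6,8,11]$, $[3;4,5,9,11]$, $[3;4,6,9,11]$, $[3;4,5,9,10]$, and $[3;4,6,9,10]$ (vertex~$4$ belongs to all of them, and vertices~$1$ and~$7$ to none). For the three parts of the claim I will use the following pairs, each pair sharing three leaves and differing in the fourth: $[3;4,6,8,11]$ and $[3;4,5,8,11]$ for part~1 (differing leaves $6,5$); $[3;4,5,9,10]$ and $[3;4,5,9,11]$ for part~2 (differing leaves $10,11$); and $[3;4,5,8,11]$ and $[3;4,5,9,11]$ for part~3 (differing leaves $8,9$).

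Fix a canonical solution $G'$. The step I will apply three times is the following. Suppose $[3;p,q,x,y]$ and $[3;p',q,x,y]$ are two induced $4$-claws sharing leaves $q,x,y$. All five edges $\e{3}{p},\e{3}{p'},\e{3}{q},\e{3}{x},\e{3}{y}$ lie in a $4$-claw, so each has a unique representative; in particular, the number of representatives of $\e{3}{q},\e{3}{x},\e{3}{y}$ incident to the copy~$3_1$ is well defined. By the ``two-and-two'' rule recalled before \cref{node1} (splitting a $4$-claw centre sends exactly two of its four edges to each copy), applied to each of the two claws: were the representatives of $\e{3}{p}$ and $\e{3}{p'}$ incident to \emph{different} copies of~$3$, that number would be forced to be both $1$ and $2$ -- contradiction, exactly as in \cref{node1}. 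Hence $\e{3}{p}$ and $\e{3}{p'}$ have (unique) representatives $\e{3_i}{p_a}$ and $\e{3_i}{p'_b}$ incident to the \emph{same} copy $3_i$, and, again by the two-and-two rule, exactly one of $\e{3}{q},\e{3}{x},\e{3}{y}$ is incident to $3_i$, say $\e{3_i}{z'}$ with $z\in\{q,x,y\}$. Now in all three of our comparisons $p$ and $p'$ are adjacent in $G$ while $z$ is adjacent to neither (the pairs $\{5,6\},\{10,11\},\{8,9\}$ are edges of $G$, and $4,8,11$ avoid $5,6$; $4,5,9$ avoid $10,11$; $4,5,11$ avoid $8,9$). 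So $\{p,p',z\}$ induces the single edge \e{p}{p'} in $G$; hence if $p_a\not\sim p'_b$ in $G'$, then $\{p_a,p'_b,z'\}$ would be an independent set of three neighbours of $3_i$ in $G'$, an induced $3$-claw -- impossible. Therefore $p_a\sim p'_b$ in $G'$, i.e.\ $\e{p_a}{p'_b}$ is a representative of \e{p}{p'}.

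Applying this with $(p,p')=(6,5),\ (q,x,y)=(4,8,11)$ yields part~1; with $(p,p')=(10,11),\ (q,x,y)=(4,5,9)$ yields part~2, noting in addition that $\e{10}{11}$ is an edge of the $4$-claw $[10;8,11,13,14]$, so its representative is unique; and with $(p,p')=(8,9),\ (q,x,y)=(4,5,11)$ yields part~3. It does not matter whether the auxiliary vertices $4,5,6,9,11$ are split, since only vertex~$3$ is used as a split centre in the argument.

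The step needing most care is the pure bookkeeping: verifying that each quoted graph is indeed an induced $4$-claw (so that all comparisons are legitimate) and that the enumeration of $4$-claws centred at~$3$ is complete (so no useful comparison is missed). After these adjacency checks in $G$, everything reduces to a mechanical re-run of the proof of \cref{node1}.
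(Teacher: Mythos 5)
Your proposal is correct and follows essentially the same approach as the paper: the paper likewise proves this claim by comparing pairs of induced $4$-claws centred at vertex~$3$ that share three leaves and differ in one, then reusing the pigeonhole/claw-freeness argument from the proof of the constraints imposed by vertex~1. You merely select different (but equally valid) pairs of $4$-claws for the three parts, and your explicit verification of the claw enumeration and of the uniqueness of the representative of edge \e{10}{11} is a harmless elaboration of details the paper leaves implicit.
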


\begin{claimproof}
    Vertex~3 is the center of four different 4-claws:
    $A= [3;4,5,9,10], B= [3;4,6,9,10], C= [3;4,6,9,11]$ and $ D= [3;4,6,8,11]$.
    The first point follows by comparing claws $A$ and $B$; the second one, by comparing claws $B$ and $C$; and the third one, by comparing claws $C$ and $D$.    
\end{claimproof}

\begin{claim}\label{node7}
Vertex~7 imposes the symmetric constraints of vertex~3.
\end{claim}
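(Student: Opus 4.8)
The plan is to obtain \cref{node7} for free from the symmetry of $G$, exactly as \cref{node10} was obtained from \cref{node1}. The key tool is the unique non-trivial automorphism $\pi$ recorded in Table~1, for which $\pi(3)=7$. Since $\pi$ is an automorphism of $G$ and is an involution, applying $\pi$ to a (canonical) solution $G'$ --- that is, splitting $\pi(v)$ precisely when $v$ is split, and renaming each representative $v_i$ to $\pi(v)_i$ --- produces again a (canonical) solution: adjacency, non-adjacency, and hence the notions of ``representative of an edge'' and of the two forbidden configurations in the definition of a canonical solution, are all preserved. Therefore every constraint that the local structure around vertex~$3$ forces on a solution is forced, after relabelling through $\pi$, on the local structure around $\pi(3)=7$.

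Concretely, I would apply $\pi$ to the four $4$-claws centered at vertex~$3$ used in \cref{node3}, namely $[3;4,5,9,10]$, $[3;4,6,9,10]$, $[3;4,6,9,11]$, $[3;4,6,8,11]$. Using $\pi(4)=13$, $\pi(5)=11$, $\pi(6)=12$, $\pi(8)=8$, $\pi(9)=9$, $\pi(10)=1$ and $\pi(11)=5$, these become four $4$-claws centered at vertex~$7$: $[7;13,11,9,1]$, $[7;13,12,9,1]$, $[7;13,12,9,5]$ and $[7;13,12,8,5]$ (one checks directly in \cref{fig:counterexample} that each of these four sets of leaves is independent and adjacent to~$7$). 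Comparing the first two $4$-claws yields, by the same pigeonhole argument as in \cref{node3}, the constraint on the edges \e{7}{11}, \e{7}{12}, \e{11}{12}; comparing the second with the third yields the constraint on \e{7}{1}, \e{7}{5}, \e{1}{5}; and comparing the third with the fourth yields the constraint on \e{7}{8}, \e{7}{9}, \e{8}{9}. As in \cref{node3}, the representatives of \e{11}{12} and of \e{8}{9} need not be unique, since none of $11,12,8,9$ is the center of a $4$-claw, whereas each of the remaining listed edges lies in a $4$-claw (through $7$, $1$, $3$ or $10$) and hence has a unique representative.

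I do not expect a genuine obstacle: the statement is a pure symmetry transfer, and the only things to verify are book-keeping ones --- that the four displayed $4$-claws are exactly the $\pi$-images of the four $4$-claws of vertex~$3$, so that no spurious edges are introduced, and that the map $v_i\mapsto\pi(v)_i$ sends canonical solutions to canonical solutions. Both are immediate from the fact, already asserted before Table~1, that $\pi$ is an automorphism of $G$. If one preferred to avoid invoking $\pi$ explicitly, an alternative --- but essentially identical --- route would be to rerun the proof of \cref{node3} verbatim on vertex~$7$, reading off its four $4$-claws directly from \cref{fig:counterexample}.
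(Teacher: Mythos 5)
Your proposal is correct and matches the paper's intent exactly: the paper's proof of this claim is literally ``analogous to the proofs of the previous claims,'' i.e.\ the same symmetry transfer via the automorphism $\pi$ that you carry out explicitly, and your computed images $[7;13,11,9,1]$, $[7;13,12,9,1]$, $[7;13,12,9,5]$, $[7;13,12,8,5]$ of the four $4$-claws centered at vertex~$3$ are all verified correctly, as is your bookkeeping of which representatives are unique. No gap; you have simply written out the details the paper leaves implicit.
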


\begin{claimproof}
        The proof is analogous to the proofs of the previous claims.       
\end{claimproof}

We now use the previous claims to deduce that the way to split vertex~8 is actually unique up to symmetry. 
First, it is clear that vertex~8 must be split because of the 3-claw $[8; 1,9,10]$. 
Assume vertex~8 is the first vertex of $G$ to be split. 

\begin{claim}
    There is a unique way (up to symmetry) to split vertex~$8$.
\end{claim}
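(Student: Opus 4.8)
The plan is to use the structural constraints already extracted in Claims~\ref{node1}--\ref{node7} together with the fact that, since vertex~$8$ is assumed to be the first vertex split, \emph{all} of its neighbours are still single vertices at the moment of the split. Vertex~$8$ is the center of the $3$-claw $[8;1,9,10]$, so it must be split into $8_1,8_2$, and the edges $(8,1),(8,9),(8,10)$ cannot all be represented at the same representative; moreover $(8,1)$ and $(8,10)$ are each part of a $4$-claw (namely $[1;2,4,6,8]$ and $[10;\dots,8]$), hence by the Observation on $4$-claws each of $(8,1)$ and $(8,10)$ has a \emph{unique} representative. So the plan is first to argue that $(8,1)$ and $(8,10)$ must be represented at \emph{different} representatives of $8$: if both $1$ and $10$ were adjacent to, say, $8_1$, then together with any representative of $(8,9)$ and of $(8,3)$ we would exceed the allowed degree or, more simply, $[8_1;1,9,10]$ (or $[8_1;1,3,10]$) would be an induced $3$-claw in $G'$ after checking the relevant non-edges ($1\not\sim 10$, $1\not\sim 9$, $9\not\sim 10$, $3\not\sim 10$ fails — so one uses the pair that is genuinely independent). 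Thus, up to renaming $8_1\leftrightarrow 8_2$, we may fix $(8_1,1)$ and $(8_2,10)$.

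Next I would locate the remaining neighbours $3,7,9$ of $8$. The first point of Claim~\ref{node3} part~3 says that the representatives of $(3,8)$ and $(3,9)$ share a representative $3_i$ of $3$, with $(8_j,9_k)$ a representative of $(8,9)$ for the matching $j,k$; similarly Claim~\ref{node7} gives the symmetric statement with $7$ in place of $3$ (and $11$ or the appropriate neighbour). Combined with the first point of Claim~\ref{node1} — edges $(1,7),(1,8),(7,8)$ all have unique representatives sharing indices — and its vertex-$10$ analogue (Claim~\ref{node10}), these force: the representative of $(1,8)$ and the representative of $(7,8)$ must be at the \emph{same} representative of $8$, which we have fixed to be $8_1$ (since $(1,8)$ is at $8_1$); hence $(7,8)$ is also at $8_1$, and then the unique representative of $(8,9)$ is forced to be at $8_2$ (otherwise $[8_1;1,7,9]$ would be a $3$-claw, using $1\not\sim7$, $1\not\sim9$, $7\not\sim9$). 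Symmetrically, the vertex-$10$ side forces $(10,13)$-type edges to sit at $8_2$. This pins down exactly which edges incident to $8$ land on $8_1$ versus $8_2$: $\{1,7\}$ (and whichever of $3$'s representatives carries $(3,8)$, to be determined) on one side, $\{9,10\}$ on the other, with $(3,8)$ and $(3,9)$ glued together by Claim~\ref{node3}, forcing $(3,8)$ to $8_1$ as well (since $(3,9)$ must avoid clashing with $9$ on $8_2$).

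The final step is a short case check that, after this assignment, no relabeling other than the global automorphism $\pi$ (Table~1) produces a genuinely different split of $8$: the only choices made were the initial orientation $8_1\leftrightarrow 8_2$ (a symmetry, absorbed by ``up to symmetry'') and the indices $i,j,k$ inside Claims~\ref{node1},~\ref{node3},~\ref{node7}, each of which has now been forced. So the partition of $N(8)$ into the two representatives is unique, and within each representative the induced structure on $8_1$ (resp.\ $8_2$) together with the forced incidences determines the split completely. The main obstacle I expect is the bookkeeping in the middle step: one must verify, for each candidate $3$-claw of the form $[8_i; a,b,c]$ with $a,b,c$ among $1,3,7,9,10$, that the three pairwise non-adjacencies actually hold in $G$ (several pairs like $3\sim7$, $1\sim3$, $7\sim10$ fail, so the only usable obstructing triples are $\{1,7,9\}$, $\{1,7,10\}$, $\{1,9,10\}$, $\{7,9,10\}$, and one must thread the argument through exactly these), and to reconcile this with the uniqueness-of-representative constraints coming from the $4$-claws without over- or under-counting representatives. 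Everything else is a direct application of the claims.
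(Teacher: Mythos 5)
There is a genuine gap, and it originates in a misreading of the adjacencies of $G$. In this graph the pairs $1\sim 7$, $7\sim 9$, $7\sim 10$, $1\sim 3$, $3\sim 9$ and $3\sim 10$ are all \emph{edges}, so among the neighbours $\{1,3,7,9,10\}$ of vertex~$8$ the \emph{only} independent triple is $\{1,9,10\}$. Your list of ``usable obstructing triples'' $\{1,7,9\}$, $\{1,7,10\}$, $\{7,9,10\}$ consists of sets that are not independent, and in particular the pivotal step ``the unique representative of $(8,9)$ is forced to $8_2$, otherwise $[8_1;1,7,9]$ would be a $3$-claw, using $1\not\sim 7$, $7\not\sim 9$'' is false. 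This propagates to a wrong conclusion: you end with the partition $\{1,7,3\}$ versus $\{9,10\}$ and a unique representative for $(8,9)$, whereas the forced split has one representative of $8$ adjacent to $10,3,9$ and the other adjacent to $1,7,9$; vertex~$9$ sees \emph{both} representatives, and nothing forbids this since $8$ centers no $4$-claw (the maximum independent set in $N(8)$ has size $3$) and $N(9)=\{3,7,8\}$ induces a triangle, so edge $(8,9)$ lies in no $4$-claw. Your assignment also contradicts \cref{node10}: the $\pi$-image of the first point of \cref{node1} glues the unique representatives of $(10,8)$, $(10,3)$ and $(3,8)$ into a triangle, so $(3,8)$ must sit on the representative of $8$ that sees $10$, not on the one that sees $1$ and $7$.

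The chain the paper uses is short and different in order: call $8_1$ the representative adjacent to $10$; \cref{node10} forces $8_1\sim 3$ (else a $5$-claw at $10$ with $14,13,12$); point~3 of \cref{node3} then forces $8_1\sim 9$; only now does the single independent triple $\{1,9,10\}$ force $1$ onto $8_2$; then \cref{node1} forces $8_2\sim 7$ and \cref{node7} forces $8_2\sim 9$. Your opening move --- separating $1$ from $10$ first and locating $3,7,9$ afterwards --- cannot be carried out as stated: with the true adjacencies, a vertex adjacent to both $1$ and $10$ yields no induced claw until $9$ (and, via $9$, vertex $3$) has been pinned to the $10$-side, so the order of deductions is not a cosmetic difference but the substance of the proof.
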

\begin{claimproof}
Let $8_1$ be the representative of $8$ made adjacent to $10$. Then $8_1$ is also adjacent to 3 (by \cref{node10}, otherwise we get a 5-claw $[10; 8_1,3,14,13,12]$). Moreover, $8_1$ is also adjacent to $9$ (by \cref{node3}, otherwise we get a 5-claw $[3;8_1,9,4,6,11]$). 
As such, vertex~$1$ cannot be made adjacent to $8_1$ (because it would create a 3-claw) and has to be adjacent to $8_2$ instead. 
We have now the symmetric deductions.
Where $8_2$ is the representative of $8$ made adjacent to $1$, then $8_2$ is also adjacent to $7$ (by \cref{node1}, otherwise we get a 5-claw $[1; 8_2,7,2,4,6]$).
Moreover, $8_2$ is also adjacent to 9 (by \cref{node7}, otherwise we get a 5-claw $[7;8_2,9,13,12,5]$). Note that vertex~10 cannot be made adjacent to $8_2$. 
This shows that the way of splitting vertex~$8$ is fully determined 
before splitting the other vertices: one representative needs to be adjacent to $10, 3$ and $9$, while the other needs to see the vertices $1, 7$ and $9$.
The graph resulting from splitting vertex~8 can be seen in \cref{fig:split8}. 
\end{claimproof}

\begin{figure}
    \centering
        \centering

\begin{tikzpicture}[node distance={15mm}, main/.style = {draw, circle}] 

\node[main] (8) {$8_2$}; 

\node[main] (1) [below of=8] {$1$};
\node[main] (9) [right of= 1] {$9$}; 

\node[main] (10) [ right of=9] {$10$};
\node[main] (15)[ above of=10] {$8_1$}; 
\node[main] (4) [below left of=1] {$4$};

\node[main] (13) [below right of=10] {$13$};
\node[main] (3) [below of=1] {$3$}; 
\node[main] (7) [below of=10] {$7$};
\node[main] (14) [ right of=13] {$14$};
\node[main] (12) [below of=14] {$12$};
\node[main] (11) [below of=13] {$11$};

\node[main] (2) [ left of=4] {$2$};
\node[main] (6) [below of=2] {$6$};
\node[main] (5) [below of=4] {$5$};

\draw (8) -- (9);
\draw (8) -- (1);
\draw (15) -- (10);
\draw (15)[out=200, in=70] to (3);
\draw (15) -- (9);
\draw (8) [out=340, in=110] to (7);

\draw (9) -- (3);
\draw (9) -- (7);
\draw (1) -- (2);
\draw (1) -- (3);
\draw (1) -- (4);
\draw (1) -- (7);
\draw (10) -- (3);
\draw (10) -- (7);
\draw (10) -- (13);
\draw (10) -- (11);
\draw (10)[out=330, in=110] to (12);
\draw (10) -- (14);
\draw (3) -- (7);
\draw (3) -- (5);
\draw (3) -- (6);
\draw (5) -- (6);

\draw (7) -- (11);
\draw (7) -- (12);
\draw (11) -- (12);
\draw (3) -- (4);
\draw (5) -- (7);
\draw (1) -- (5);
\draw (1)[out=210, in=70] to (6);
\draw (7) -- (13);
\draw (3) -- (11);

\end{tikzpicture} 
   
    \caption{The graph $G$ after the split of vertex~8}
    \label{fig:split8}
\end{figure}

After the above split of vertex~8, we have one more 4-claw centered at vertex~3 (and, symmetrically, at vertex~7). These 4-claws introduce the following new constraints.

\begin{claim}\label{vertex3} 
Vertex~3 imposes now the following additional constraints:
\begin{enumerate}
    \item There exist three indices $i,j,k$ ($i,j,k\in\{1,2\}$) such that:
    \begin{itemize}
        \item \e{3_i}{7_j} is the unique representative of edge \e{3}{7},
        \item \e{3_i}{11_k} is the unique representative of edge \e{3}{6} and
        \item \e{7_j}{11_k} is the unique representative of edge \e{7}{11}.
    \end{itemize}
\end{enumerate}
Adding this to the previous constraints, we have that the unique representatives of the edges \e{7}{11} and \e{10}{11} should both end up within the neighborhood of either $3_1$ or $3_2$.
\end{claim}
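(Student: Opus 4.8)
The plan is to establish \cref{vertex3} the same way the earlier claims about vertex~3 were established, namely by comparing two 4-claws centered at vertex~3 that differ in exactly one leaf. After the split of vertex~8, the vertex~$8_1$ is adjacent to $3$, $9$ and $10$, so $[3; 4,6,8_1,11]$ and $[3; 4,6,9,11]$ are both 4-claws, but I actually want a claw that involves $7$: since $3$ is adjacent to $7$, $11$, $8_1$ and $4$ (with $7,11,8_1,4$ pairwise non-adjacent — one must check $7\not\sim 11$? no, $7\sim 11$), I need to pick the leaves carefully. The correct pair to compare is $C'=[3; 4,6,9,11]$ (or the new claw $[3;4,6,8_1,11]$) against the new 4-claw $[3; 4,7,9,11]$ wait — here one must verify that $\{4,7,9,11\}$ is independent in $G$: $4\sim 7$? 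No. $7\sim 9$? Yes! So that is not a claw. I would therefore instead use the new 4-claw whose existence the text asserts; concretely, after identifying it I would take it together with one of the old 4-claws from \cref{node3} sharing three of its leaves, and run the same pigeonhole argument: when vertex~$3$ is split into $3_1,3_2$, each 4-claw forces its four edges to split $2$–$2$ between the two representatives, and comparing two 4-claws that agree on three leaves forces the representative of the edge to the ``swapped'' leaf, together with the two edges to the three common leaves, to be concentrated on a single representative $3_i$ — and consequently the internal edge among the two swapped leaves must also be realized incident to that same $3_i$, which is exactly the displayed three-index statement.

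Having obtained the new constraint that $\e{3_i}{7_j}$, $\e{3_i}{11_k}$ and $\e{7_j}{11_k}$ are the (unique) representatives with a common index $i$ on the $3$-side, I would then combine it with point~2 of \cref{node3}, which gives a common index $i'$ on the $3$-side for the triple $\e{3_{i'}}{10}$, $\e{3_{i'}}{11}$, $\e{10}{11}$. The key observation is that both constraints pin the unique representative of the edge $\e{3}{11}$: since $3_i$ and $3_{i'}$ are each forced to be the endpoint of the unique representative of $\e{3}{11}$, we must have $i=i'$. Therefore the unique representative of $\e{7}{11}$ is incident to $3_i$ and the unique representative of $\e{10}{11}$ is incident to the same $3_i$; in other words both of these edges lie within the (open) neighborhood of a single representative of vertex~$3$, which is precisely the concluding sentence of the claim.

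I expect the main obstacle to be purely bookkeeping: correctly identifying the new 4-claw centered at $3$ that appears only after the split of vertex~$8$, checking that its leaf set is independent in the graph of \cref{fig:split8} (in particular that $8_1$, being adjacent only to $9,10,3$, can legitimately serve as a leaf, and that the chosen leaves are pairwise non-adjacent), and making sure the ``swapped leaf'' between the compared claws is exactly $7$ so that the conclusion mentions $\e{7}{11}$ rather than some other edge. Once the right pair of claws is fixed, the argument is a verbatim repetition of the pigeonhole reasoning in the proof of \cref{node1}: two representatives of $3$, three edges to the three common leaves plus the internal edge, forcing concentration on one side; there is genuinely no new idea, only the need to track indices consistently with the automorphism $\pi$ so that the symmetric statement for vertex~$7$ comes for free. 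I would present the proof as ``analogous to \cref{node3}, now using the additional 4-claw available after splitting vertex~$8$,'' and then spell out the one-line deduction that forces $i=i'$ and hence places the representatives of $\e{7}{11}$ and $\e{10}{11}$ in a common neighborhood.
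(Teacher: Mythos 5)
Your overall strategy is exactly the paper's: compare two 4-claws centered at vertex~3 that share three leaves and differ only in the fourth (with the swapped pair being $7$ and $11$), run the pigeonhole argument of \cref{node1}, and then chain the resulting constraint with point~2 of \cref{node3} through the unique representative of \e{3}{11} to conclude that the representatives of \e{7}{11} and \e{10}{11} must land in the neighborhood of the same $3_i$. That final deduction is correct and is precisely what the concluding sentence of the claim asserts.

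The gap is that you never actually exhibit the new 4-claw, and that identification is the only non-routine content of the proof. You test $[3;4,7,9,11]$, correctly reject it because $7\sim 9$, and then defer to ``the new 4-claw whose existence the text asserts'' without finding it. The claw the paper uses is $E=[3;8_1,7,4,6]$: it exists only after the split of vertex~8, because in $G$ the leaf set $\{8,7,4,6\}$ is not independent ($8\sim 7$), whereas in the graph of \cref{fig:split8} the representative $8_1$ is adjacent only to $3$, $9$ and $10$, so $\{8_1,7,4,6\}$ becomes independent while $3$ remains adjacent to all four vertices. Comparing $E$ with $D=[3;4,6,8,11]$ (whose leaf $8$ is realized by $8_1$, since \e{3_i}{8_1} is the unique representative of \e{3}{8}) gives shared leaves $4,6,8_1$ and swapped leaves $7$ and $11$, which is exactly the configuration your pigeonhole argument requires. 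Without naming this claw the argument does not get off the ground; with it, the rest of your write-up is a correct (and more detailed) completion of the paper's one-line proof.
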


\begin{claimproof}
This follows from comparing the 4-claws $D= [3;4,6,8,11]$ and $ E= [3;8_1,7,4,6]$.
\end{claimproof}

At this step, the constraints on vertex~3 allow us to derive the following result.

\begin{claim}
    The way to split vertex~3 is unique up to symmetry.
\end{claim}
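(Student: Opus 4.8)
The plan is to show that, once the rigidity constraints collected in the previous claims are imposed, the distribution of the edges incident to vertex~$3$ between its two representatives is forced up to exchanging the two representatives. Vertex~$3$ must be split in any solution, since it is the center of an induced $3$-claw (for instance $[3;1,9,10]$), as already observed; write $3_1,3_2$ for its representatives. The goal is to determine which neighbours of $3$ attach to $3_1$ and which to $3_2$.

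First I would gather the available rigidity. By \cref{vertex3}, the unique representatives of $(7,11)$ and $(10,11)$ lie in the closed neighbourhood of a single representative of $3$; after renaming, call it $3_1$, so $3_1$ is adjacent to the designated representatives of $7$, $10$ and $11$. By point~1 of \cref{node3} the representatives of $(3,5)$ and $(3,6)$ attach to a common representative of $3$, by point~3 of \cref{node3} those of $(3,8)$ and $(3,9)$ attach to a common one, and by \cref{vertex3} the representative of $(3,7)$ attaches to $3_1$. This reduces the problem to deciding, for each of the ``blocks'' $\{5,6\}$ and $\{8_1,9\}$ and for the vertices $1$ and $4$, whether it goes to $3_1$ or to $3_2$.

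Next I would eliminate the impossible distributions using the no-induced-$3$-claw condition at $3_1$ and at $3_2$, together with the explicit independent triples inside $N_G(3)$ produced by the $4$-claws centred at $3$: concretely $\{4,6,x\}$ is independent for every $x\in\{7,8_1,9,10,11\}$ and $\{4,5,x\}$ is independent for every $x\in\{8_1,9,10,11\}$, while $1\not\sim 8_1$, $1\not\sim 9$, $1\not\sim 10$, $1\not\sim 11$. Since $3_1$ already sees (representatives of) $7,10,11$, these triples force $3_1$ to miss at least two of $4,5,6$ and to push $1$ onto the other representative; running through the few remaining possibilities for where $\{5,6\}$ and $\{8_1,9\}$ land, one checks that exactly one distribution — and its image under the swap $3_1\leftrightarrow 3_2$ — survives, the surviving one also being forced to be compatible with the already-fixed split of vertex~$8$ (\cref{fig:split8}) and with the constraints on $1$, $7$ and $10$ from \cref{node1,node10,node7}, which kills the last spurious cases.

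The main obstacle is bookkeeping rather than depth: several neighbours of $3$ (namely $5,6,8,9$, and a priori also $7,10,11$) are themselves split, so each adjacency ``$3_i\sim x$'' must be read as ``$3_i$ is adjacent to the correct representative of~$x$'', and one must lean on the uniqueness clauses of \cref{node3} and \cref{vertex3} (which edges have a single representative) to keep the analysis finite and to rule out every ``mixed'' assignment that would try to avoid a $3$-claw at both $3_1$ and $3_2$. Once the forced distribution is exhibited, uniqueness up to symmetry is immediate: the only remaining freedom is the relabelling $3_1\leftrightarrow 3_2$, which (together with the rest of the construction) can be absorbed by composing with the automorphism $\pi$ of $G$.
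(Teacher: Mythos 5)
There is a genuine gap in the elimination step. Your plan is to rule out the wrong distributions of $N(3)$ between $3_1$ and $3_2$ by exhibiting induced $3$-claws centred at $3_1$ or $3_2$, but that test alone does not isolate a unique distribution. Concretely, after fixing $3_1\sim 7,10,11$, the assignment that sends $8_1$ and $9$ to the \emph{other} representative (e.g.\ $3_1\sim\{5,6,7,10,11\}$ and $3_2\sim\{1,4,8_1,9\}$, or $3_1\sim\{1,5,6,7,10,11\}$ and $3_2\sim\{4,8_1,9\}$) creates no independent triple in the neighbourhood of either $3_1$ or $3_2$: in $\{1,4,8_1,9\}$ every pair of non-adjacent vertices ($1$--$8_1$, $1$--$9$, $4$--$8_1$, $4$--$9$) fails to extend to an independent triple because $1\sim 4$ and $8_1\sim 9$, and similarly on the other side since $5\sim 6$, $10\sim 11$ and $1$ is adjacent to $5,6,7$. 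So ``running through the possibilities'' with local $3$-claws at the representatives of $3$ leaves several survivors. Relatedly, your intermediate assertions are not consequences of the cited triples: the triples $\{4,6,x\}$ and $\{4,5,x\}$ only forbid $4$ from cohabiting with $5$ or $6$ at $3_1$, they do not force $3_1$ to miss \emph{two} of $4,5,6$, and nothing in them pushes $1$ to $3_2$ unless $9$ or $8_1$ is already known to sit at $3_1$.

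The missing idea — and the pivot of the paper's proof — is a constraint located at a representative of vertex~$10$, not of vertex~$3$. From the $4$-claws $[10;14,13,12,3]$ and $[10;14,13,12,8_1]$, the unique representatives of the edges $(10,3)$ and $(10,8_1)$ must be incident to the same representative $10_j$, and $10_j$ still sees at least one of the pairwise non-adjacent vertices $12,13,14$; hence if the representative $3_i$ adjacent to $10_j$ were not adjacent to $8_1$, then $10_j$ would be the centre of a $3$-claw on $\{12/13/14,\,3_i,\,8_1\}$. This is what forces $8_1$ (and then $9$, via the pair of $4$-claws $[3;4,6,11,8_1]$ and $[3;4,6,11,9]$) onto the same representative of $3$ as $7,10,11$; only after that does the local ``no further neighbour can be added to $3_1$'' argument close the case analysis. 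You do list \cref{node10} and compatibility with the split of $8$ among your tools, but as ``killing the last spurious cases'' after a local analysis that, as written, does not narrow the field; the argument needs this cross-vertex deduction as its main step, not as a final clean-up.
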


\begin{claimproof}    
By the above, we already know:
\begin{enumerate}
    \item The representatives of \e{3}{7} and \e{3}{11} are incident to each other (because of the 4-claws $[3;4,6,8_1,7]$ and $[3;4,6,8_1,11]$).
    \item The representatives of \e{3}{11} and \e{3}{10} are incident to each other (because of the 4-claws $[3;4,6,9,11]$ and $[3;4,6,9,10]$).
    \item The representatives of \e{10}{3} and \e{10}{8_1} are incident to each other (because of the 4-claws $[10;14,13,12,3]$ and $[10;14,13,12,8_1]$) and to a representative of \e{3}{8}.
    \item The representatives of \e{3}{8_1} and \e{3}{9} are incident to each other (because of the 4-claws $[3;4,6,11,8_1]$ and $[3;4,6,11,9]$).
    \item There is an unique pair $i,j$ such that $3_i$ and $10_j$ are adjacent (since the edge belongs to a 4-claw $[10;14,13,12,3]$).
\end{enumerate}

Points 1 and 2 imply that one of the vertices representing 3, w.l.o.g., $3_1$, is adjacent to vertices 7, 11 and 10. Let $10_1$ be the vertex representing 10 and adjacent to $3_1$ (well defined by point 5). Then $10_1$ is adjacent to $8_1$ and it must also be the case that $3_1$ and $8_1$ are adjacent. Otherwise, we would obtain the 5-claw $[10;14,12,13,3_1,8_1]$ in the original certificate graph. By point 4, since vertex~$3_1$ is adjacent to $8_1$, it must also be adjacent to 9.
Now, $3_1$ cannot have any more neighbors, as if we add any new edges, we would create a $K_{1,3}$. Therefore, the way of splitting vertex~$3$ is uniquely determined up to symmetry: one of the representatives of $3$ is adjacent to 7, 11, 10, 8 and 9; and the other one, to the rest of its neighbors.
The graph resulting from splitting vertex~3 can be seen in \cref{fig:split3}.
\end{claimproof}

\begin{figure}
    \centering
        \centering

\begin{tikzpicture}[scale=0.9,transform shape,node distance={15mm}, main/.style = {draw, circle}] 

\node[main] (8) {$8_2$}; 

\node[main] (1) [below of=8] {$1$};
\node[main] (9) [right of= 1] {$9$}; 

\node[main] (10) [ right of=9] {$10$};
\node[main] (15)[ above of=10] {$8_1$}; 


\node[main] (1) [left of=9] {$1$};
\node[main] (4) [below left of=1] {$4$}; 
\node[main] (13) [below right of=10] {$13$};
\node[main] (3) [below of=1] {$3_2$}; 
\node[main] (16) [right of=3] {$3_1$}; 

\node[main] (7) [below of=10] {$7$};
\node[main] (14) [ right of=13] {$14$};
\node[main] (12) [below of=14] {$12$};
\node[main] (11) [below of=13] {$11$};

\node[main] (2) [ left of=4] {$2$};
\node[main] (6) [below of=2] {$6$};
\node[main] (5) [below of=4] {$5$};

\draw (8) -- (9);
\draw (8) -- (1);
\draw (15) -- (10);
\draw (15) -- (9);
\draw (8) [out=340, in=110] to (7);

\draw (9) -- (7);
\draw (1) -- (2);

\draw (1) -- (4);
\draw (1) -- (7);
\draw (10) -- (16);
\draw (10) -- (7);
\draw (10) -- (13);
\draw (10) -- (11);
\draw (10)[out=330, in=110] to (12);
\draw (10) -- (14);

\draw (9) -- (16);
\draw (16) -- (7);
\draw (16) -- (11);
\draw (3) -- (5);
\draw (3) -- (6);
\draw (3) -- (4);

\draw (1) -- (3);
\draw (15) -- (16);

\draw (5) -- (6);

\draw (7) -- (11);
\draw (7) -- (12);
\draw (11) -- (12);

\draw (5) -- (7);
\draw (1) -- (5);
\draw (1)[out=210, in=70] to (6);
\draw (7) -- (13);

\end{tikzpicture} 
   
    \caption{Graph resulting from the split of vertex~3}
    \label{fig:split3}
\end{figure}

Finally, we consider vertex~7 and we state that it cannot be split without creating a $K_{1,3}$.

\begin{claim}\label{split7}
    There is no way to split vertex~7 without creating a $K_{1,3}$ with center either $7_1$ or $7_2$.
\end{claim}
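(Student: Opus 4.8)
The plan is to argue by contradiction: suppose a canonical solution $G'$ exists. By the previous claims the splits of vertices~8 and~3 are forced, so we may work in the graph of \cref{fig:split3}, and there $[7;1,9,10]$ is an induced 3-claw, so vertex~7 must also be split, say into non-adjacent representatives $7_1,7_2$. The first step is to set up bookkeeping: for every neighbour $v$ of 7 that lies in some induced 4-claw centered at 7 in the graph of \cref{fig:split3}, the edge \e{7}{v} has exactly one representative in $G'$ (it has at most two by the observation on 3-claws, and if it had two then that 4-claw could not be resolved without a $K_{1,3}$), so I would define $f(v)\in\{1,2\}$ so that $(7_{f(v)},v)$ is this unique representative. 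One checks that $f$ is defined on all of $\{1,3_1,5,8_2,9,11,12,13\}$; only the edge \e{7}{10} is left uncovered, and it will never be needed.

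Next I would collect the coupling constraints already available. \cref{node7} (the symmetric counterpart of \cref{node3}) gives $f(1)=f(5)$, $f(11)=f(12)$ and $f(8_2)=f(9)$, while the symmetric counterpart of \cref{vertex3} --- obtained by comparing the 4-claw $[7;8_2,3_1,13,12]$ created at vertex~7 by the split of vertex~8 with the 4-claw $[7;13,12,8_2,5]$ --- gives $f(5)=f(3_1)$. Hence $f(1)=f(5)=f(3_1)$ and $f(11)=f(12)$, and a short case distinction finishes the proof. If $f(11)=f(1)$, then $7_{f(1)}$ is adjacent in $G'$ to representatives of $1$, $3_1$ and $12$, which are pairwise non-adjacent, so $[7_{f(1)};1,3_1,12]$ is an induced $K_{1,3}$. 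Otherwise $f(11)=f(12)\ne f(1)=f(5)=f(3_1)$, say $f(1)=1$ and $f(11)=2$: if $f(8_2)=f(9)=1$ then $5,3_1,8_2$ are pairwise non-adjacent neighbours of $7_1$; if instead $f(8_2)=f(9)=2$ then the 4-claw $[7;1,9,11,13]$ (whose four edges split $2$--$2$, while $f(1)=1$ and $f(9)=f(11)=2$) forces $f(13)=1$, and then $1,3_1,13$ are pairwise non-adjacent neighbours of $7_1$. Swapping the names of the two representatives of 7 covers the remaining symmetric situation, so every split of 7 produces an induced $K_{1,3}$ centered at $7_1$ or $7_2$, which is the claim.

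The routine part will be the non-adjacency facts used above: $1\not\sim 12$, $1\not\sim 13$, $3_1\not\sim 12$, $3_1\not\sim 13$ and $5\not\sim 8_2$ hold because the corresponding pairs are non-edges of $G$, while $1\not\sim 3_1$, $3_1\not\sim 5$ and $3_1\not\sim 8_2$ hold because, as established when splitting vertex~3, the neighbourhood of $3_1$ in $G'$ is exactly $\{7_?,11_?,10_?,8_1,9\}$. The one genuinely delicate step --- and the reason the proof will need the earlier structural claims instead of a raw case analysis over all ways of splitting 7 --- will be pinning down the coupling $f(1)=f(5)=f(3_1)$: it is exactly this that forces three edges at 7 onto the same side of the split and thereby creates the claw, whereas a naive split isolating $\{1,5,3_1\}$ on one side looks claw-free until one also tracks, via the 4-claws centered at 7, the forced positions of the edges to $11$, $12$, $9$, $8_2$ and $13$. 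Once that coupling is in hand the remainder is a finite check, kept short by the fact that the subgraph of \cref{fig:split3} induced on $N(7)$ has independence number only~4.
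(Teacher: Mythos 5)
Your argument is correct and rests on the same engine as the paper's proof: pairwise comparison of the 4-claws centred at vertex~7, combined with the already-forced neighbourhoods $N(3_1)=\{7,11,10,8_1,9\}$ and $N(8_2)=\{1,7,9\}$, pushes three pairwise non-adjacent neighbours onto a single representative of~7. The difference is in how the couplings are obtained. The paper applies the automorphism $\pi$ to the whole ``unique split of vertex~3'' analysis and concludes in one step that some representative of~7 must see $1$, $8_2$, $9$, $3_1$ and~$5$, yielding the claw $[7;8_2,3_1,5]$; you instead re-derive only $f(1)=f(5)=f(3_1)$, $f(11)=f(12)$ and $f(8_2)=f(9)$ directly from the 4-claws at~7 and close the remaining freedom by a short case analysis, one branch of which reproduces the paper's claw. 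Your version is longer but avoids leaning on the tacit fact that the split of vertex~8 is itself $\pi$-symmetric (i.e., that $\pi$ extends to the post-split graph by swapping $8_1$ and $8_2$), which the paper's symmetry transfer silently requires. One correction: your closing remark that the subgraph of \cref{fig:split3} induced on $N(7)$ has independence number~$4$ is false --- the set $\{5,8_2,3_1,12,13\}$ is independent there, since $3_1$ sees only $7,11,10,8_1,9$, $8_2$ sees only $1,7,9$, and $5,12,13$ are pairwise non-adjacent in $G$ and non-adjacent to $3_1$ and $8_2$. This does not damage your proof, which never uses that bound; in fact it strengthens the conclusion, because $[7;5,8_2,3_1,12,13]$ is then an induced $K_{1,5}$ centred at~7 after the splits of~8 and~3, and the pigeonhole principle (five pairwise non-adjacent neighbours cannot be shared between two representatives with at most two each) already proves the claim, short-circuiting both your case analysis and the paper's symmetry argument.
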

\begin{claimproof}
If we apply the symmetry of the graph before splitting vertex~3, we would obtain that one of the representatives of vertex~$7$, say $7_1$ has to be adjacent to 1, 8, 9, 3 and 5. 
However, since vertices 3 and 8 have already been split, these five edges now create the 3-claw $[7; 8_2, 3_1, 5]$. 
Furthermore, note that $3_1$ cannot be made adjacent to the other representative of vertex~7, as it also creates a 3-claw ($[7; 3_1, 12,13]$).
Thus, it is not possible to split vertex~7 maintaining $G'$ as a unit interval graph. 
\end{claimproof}

The last claim concludes the proof that the $K_{1,5}$-free interval graph presented is not disjoint unit 2-interval. 
\end{proof}


We conclude this section showing that \cref{conj:roberts} can actually be generalized for disjoint unit $d$-interval graphs for any $d > 2$.

\begin{corollaryrep}\label{corollaryd}
There exists a $K_{1, 2d+1}$-free interval graph that is not a disjoint unit $d$-interval graph.
\end{corollaryrep}
\begin{proof}
    One can extend the example for $d=2$ (displayed in \cref{fig:counterexample}) by adding $2d-4$ common neighbors to vertices 1 and 3 and $2d-4$ common neighbors to vertices 7 and 10. The proof that this graph is not disjoint unit $d$-interval is analogous to the proof for $d=2$.
\end{proof}

\section{Inclusions between the different subclasses of d-interval graphs}\label{SectionBalanced}

In this section, we analyze the relationships between different subclasses of multiple interval graphs. We have already seen that 2-interval graphs and disjoint 2-interval graphs are equivalent. 
Furthermore, the results from the previous two sections imply that the class of disjoint unit 2-interval graphs is properly contained in the class of unit 2-interval graphs. 
In the following, we summarize the containment relationships between unit 2-interval graphs, disjoint unit 2-interval graphs, balanced 2-interval graphs and disjoint balanced 2-interval graphs (see
\cref{fig:classes free} for a graphical illustration).

\begin{figure}[!ht]
    \centering
    \begin{tikzpicture}[scale=0.8, transform shape]
    \usetikzlibrary{arrows}
    \usetikzlibrary{shapes}
    \tikzstyle{every node}=[draw=black, minimum width=0pt, align=center,rounded corners]
    \tikzset{>=Stealth}

    \node[text width=5cm] (f) at (0, 3) {disjoint 3-interval $=$ 3-interval};
    \node[text width=5cm] (g) at (0, 2) { balanced 3-interval};
    \node[fill=white,text width=5cm] (h) at (0, 1) {disjoint balanced $3$-interval};
        
    \node[text width=5cm] (i) at (-7, 1.5) {disjoint 2-interval $=$ 2-interval};
    
    \node[text width=5cm] (c) at (0, 0) {disjoint balanced $2$-interval $=$ balanced 2-interval};
    \node[text width=5cm] (b) at (0,-1) {unit 2-interval};

    \node[text width=5cm] (a) at (0, -2) {disjoint unit 2-interval};


    
    \draw[->] (c) -- (b);
    
    \draw[->] (h) -- (c);
    
    \draw[->] (f) -- (g);
    \draw[->] (g) -- (h);
    \draw[->,rounded corners=5pt] (f) -- (-7,3) -- (i);
    \draw[->,rounded corners=5pt] (i) -- (-7,0) --  (c);
    \draw[->] (b) -- (a);


    \end{tikzpicture}
    \caption{\label{fig:classes free}
    Landscape of graph subclasses of 3-interval graphs. 
    An arrow from a graph class $\mathcal{C}$ to a $\mathcal{C'}$ indicates that $\mathcal{C'} \subset \mathcal{C}$. The relationships between the class of 2-interval graphs and the classes of balanced 3-interval graphs and disjoint balanced 3-interval graphs are not known. 
    }
\end{figure}



\begin{theoremrep}
\label{thm:summary}
\begin{enumerate}
    \item The classes of \emph{2-interval} and \emph{disjoint 2-interval} graphs are \emph{equivalent}.
    \item The classes of \emph{balanced 2-interval} and \emph{disjoint balanced 2-interval} graphs are \emph{equivalent}.
    \item The class of \emph{unit 2-interval graphs} is \emph{properly contained} in the class of \emph{disjoint balanced 2-interval} graphs. 
    \item The class of \emph{disjoint unit 2-interval} graphs is \emph{properly contained} in the class of \emph{unit 2-interval} graphs.  
\end{enumerate}
    
\end{theoremrep}

\begin{proof}
    \begin{enumerate}
    
        \item See \cref{obs:disjoint free equivalent}.
        \item \label{balancedequiv} To see that the classes of balanced and disjoint balanced 2-intervals are equivalent, it suffices to observe that if we have a balanced 2-interval representation, then for every pair of intersecting intervals $[a,b]$ and $[c,d]$ (with $a\leq c\leq b\leq d$) associated to the same vertex, we can stretch both intervals until the middle point of their intersection, i.e., we can replace both intervals by $[a, a+\frac{b-c}{2}-\epsilon]$ and $ [a+ \frac{b-c}{2}+\epsilon, d]$, respectively, for some $\epsilon$ sufficiently small. This procedure yields a disjoint balanced 2-interval representation.
        \item To see that unit 2-interval graphs have a disjoint balanced 2-interval representation, it suffices to apply the same procedure as in \ref{balancedequiv}. Now the two obtained intervals won't have unit length any longer, but the length of both intervals will be the same.
    The inclusion is proper because $K_{5,3}$ is disjoint balanced 2-interval (see \cite[Fig. 3]{DBLP:conf/wg/GambetteV07}) but not unit 2-interval (the latter follows from the fact that every vertex of degree 3 is adjacent to 5 independent vertices, so one of the intervals must intersect more than two disjoint intervals).
    \item As shown before, the graph in \cref{fig:counterexample} is not disjoint unit 2-interval, but applying the algorithm presented in Section~\ref{SectionAlgorithm}, one can obtain a unit 2-interval representation (see \cref{fig:2overlap}) of it. 
This proves that the class of disjoint unit 2-interval graphs is properly contained in the class of unit 2-interval graphs.
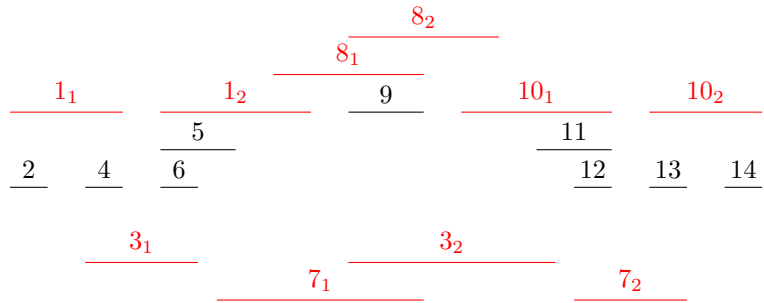
\begin{figure}[h]
    \centering
        \centering
        
    \begin{tikzpicture}[scale=0.5]

    \draw[red] (-3,6) -- (1,6)  node[midway,above]{$8_1$};
        \draw[red] (-1,7) -- (3,7)  node[midway,above]{$8_2$};

    \draw[red] (-10,5) -- (-7,5)  node[midway,above] {$1_1$};
     \draw[red] (-6,5) -- (-2,5)  node[midway,above] {$1_2$};
    \draw (-1,5) -- (1,5) node[midway,above] {$9$};
    \draw[red] (2,5) -- (6,5) node[midway,above] {$10_1$};
        \draw[red] (7,5) -- (10,5) node[midway,above] {$10_2$};

    \draw (-10,3) -- (-9,3) node[midway,above] {$2$};
    \draw (-8,3) -- (-7,3) node[midway,above] {$4$};
    \draw (-6,3) -- (-5,3) node[midway,above] {$6$};
    \draw (-6,4) -- (-4,4) node[midway,above] {$5$};
 \draw (9,3) -- (10,3) node[midway,above] {$14$};
    \draw (7,3) -- (8,3) node[midway,above] {$13$};
    \draw (5,3) -- (6,3) node[midway,above] {$12$};
    \draw (4,4) -- (6,4) node[midway,above] {$11$};

    \draw[red] (-8,1) -- (-5,1) node[midway,above] {$3_1$};
        \draw[red] (-1,1) -- (4.5,1) node[midway,above] {$3_2$};

    \draw[red] (-4.5,0) -- (1,0) node[midway,above] {$7_1$};
        \draw[red] (5,0) -- (8,0) node[midway,above] {$7_2$};


    \end{tikzpicture}
    \caption{Unit 2-interval representation of the graph in \cref{fig:counterexample}.}
    \label{fig:2overlap}
\end{figure}
    \end{enumerate}
 
\end{proof}

We finish by showing that the previous theorem cannot be completely generalized for the subclasses of $d$-interval graphs, as the class of balanced $d$-interval graphs is not equivalent to the class of disjoint balanced $d$-interval graphs for $d>2$. We first construct a graph that is balanced 3-interval but not disjoint balanced 3-interval and then show how to generalize this construction for every $d>3$.

    \begin{theorem}\label{disjointbalanced}
The class of disjoint balanced 3-interval graphs is properly contained in the class of balanced 3-interval graphs.
\end{theorem}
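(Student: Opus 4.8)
The plan is to prove Theorem~\ref{disjointbalanced} by exhibiting a graph $G$ that is balanced $3$-interval but \emph{not} disjoint balanced $3$-interval; since a disjoint balanced representation is in particular a balanced one, this yields the proper containment. The graph $G$ is built by chaining five copies $C_1,\dots,C_5$ of $K_{11,4}$ (each $C_i$ with ``long'' side $\{f_i^1,\dots,f_i^{11}\}$ and ``short'' side $\{t_i^1,\dots,t_i^4\}$), joining $f_i^{11}$ to $f_{i+1}^1$, and attaching six further vertices. Two of them, $v_1$ and $v_2$, are each made adjacent to all vertices of $C_2$ and of $C_4$; in addition $v_1$ gets a pendant neighbour and is joined to $f_3^{11}$ and $f_5^1$, while $v_2$ is joined to $v_1$, $f_1^{11}$ and $f_3^1$. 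The remaining four vertices $v_3,v_4,v_5,v_6$ are local gadgets placed around $C_2, C_3, C_4$: each is adjacent to nine consecutive $f$-vertices of one $C_i$ together with two of its $t$-vertices, chosen so as to force the gadget's intervals to be short. Finally $v_1$ and $v_2$ are joined to all of $v_3,v_4,v_5,v_6$.

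First I would invoke the rigidity of $K_{11,4}$: by~\cite{DBLP:journals/dam/WestS84} every $3$-interval representation of $K_{11,4}$ is \emph{continuous} (the union of its ground intervals is a single interval), so in every $3$-interval representation of $G$ each $C_i$ occupies a block, and the blocks lie on the line in the order $C_1<\cdots<C_5$, up to reversal. Consequently $v_1$ must devote one interval $I_1$ to stabbing $C_2$, a second $I_2$ to stabbing $C_4$, and a third $I_3$ to its pendant neighbour, so all three intervals of $v_1$ are consumed; balancedness forces $\operatorname{len}(I_1)=\operatorname{len}(I_2)=\operatorname{len}(I_3)$. The neighbourhood of $v_2$, by contrast, can in principle be covered by only two intervals: one interval $J$ meeting $C_2$ (and, because of the forced adjacencies $f_1^{11},f_3^1$, spanning a definite region) and one interval $J_3$ meeting $C_4$. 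An elementary containment argument using the extra adjacencies forced on $v_1$ and $v_2$ gives $J_3\subsetneq I_2$ and $I_1\subsetneq J$, so $\operatorname{len}(J_3)<\operatorname{len}(I_1)=\operatorname{len}(I_2)<\operatorname{len}(J)$. Hence $v_2$ cannot be represented balanced with two equal-length intervals, and in a balanced representation the region covered by $J$ must be split into two intervals $J_1,J_2$ of $v_2$.

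The crux is to show that $J_1$ and $J_2$ cannot be taken disjoint. If they were, then $\operatorname{len}(J_1)=\operatorname{len}(J_2)\le\tfrac12\operatorname{len}(J)$ while the third interval of $v_2$ has length $\operatorname{len}(J_3)$; so balancedness would require $\operatorname{len}(J_3)\le\tfrac12\operatorname{len}(J)$, i.e. $\operatorname{len}(J)-\operatorname{len}(J_3)\ge\operatorname{len}(J_3)$. It therefore suffices to establish $\operatorname{len}(J)-\operatorname{len}(J_3)<\operatorname{len}(J_3)$, which is exactly what the gadget vertices deliver: $v_3,v_4$ bound $\operatorname{len}(I_2)-\operatorname{len}(J_3)\le 2a$ and $v_5,v_6$ bound $\operatorname{len}(J)-\operatorname{len}(I_1)\le 2a'$, where $a$ (resp.\ $a'$) is the common length of the intervals of $v_3$ (resp.\ $v_4$). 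The quantitative engine is that in any $3$-interval realization of $K_{11,4}$ at least four of its $3$-intervals are disjoint from the first interval covering a given $t_i^k$; combined with the fact that each gadget vertex sees nine consecutive $f$-vertices plus two $t$-vertices, this forces two pairwise-disjoint short intervals of that gadget vertex to lie inside the corresponding $C$-block. Since $J_3$ straddles $C_4$, two disjoint intervals of length $a$ and two disjoint intervals of length $a'$ all fit inside $J_3$, giving $2a+2a'<\operatorname{len}(J_3)$ and hence $\operatorname{len}(J)-\operatorname{len}(J_3)\le(\operatorname{len}(I_2)-\operatorname{len}(J_3))+(\operatorname{len}(J)-\operatorname{len}(I_1))\le 2a+2a'<\operatorname{len}(J_3)$, the desired contradiction. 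For the positive direction I would just exhibit a balanced $3$-interval representation in which $J$ is replaced by two overlapping equal-length intervals and all other $3$-intervals are chosen balanced, which is routine given the block structure.

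The main obstacle is the quantitative bookkeeping in the previous paragraph: one must fix the gadget adjacencies precisely enough that the forced-short intervals of $v_3,\dots,v_6$ genuinely fit pairwise-disjointly inside $J_3$ and that $2a+2a'<\operatorname{len}(J_3)$ is strict, while also checking that the explicit balanced (non-disjoint) representation used for the positive direction respects every constraint imposed by the $C_i$'s and the gadgets. Once the case $d=3$ is settled, the generalization to $d>3$ is immediate: give $v_1$ enough pendant neighbours that all $d$ of its intervals are consumed, replace each $K_{11,4}$ by $K_{d^2+d-1,\,d+1}$, and shift the index ranges of the gadget adjacencies accordingly (to $\{1,\dots,d^2-d+1-2d+4\}$ and $\{2d-3,\dots,d^2+d-1\}$); the same argument then shows that this graph is balanced $d$-interval but not disjoint balanced $d$-interval.
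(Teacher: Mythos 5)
Your construction and overall strategy coincide with the paper's: chain five $K_{11,4}$'s, use continuity of their $3$-interval representations to force $v_1$ to spend all three of its (equal-length) intervals, derive $\operatorname{len}(J_3)<\operatorname{len}(I_2)=\operatorname{len}(I_1)<\operatorname{len}(J)$, and then rule out a disjoint balanced representation of $v_2$ by showing $\operatorname{len}(J)-\operatorname{len}(J_3)<\operatorname{len}(J_3)$. However, there is a genuine gap exactly at the quantitative step you flag as ``the main obstacle,'' and it is not merely bookkeeping. Your chain of inequalities needs two disjoint intervals of length $a$ \emph{and} two disjoint intervals of length $a'$ (in your notation, the lengths of the short intervals of $v_3,v_4$ and of $v_5,v_6$ respectively) to sit pairwise disjointly \emph{inside $J_3$}. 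For $v_3$ and $v_4$ this is fine, since they are adjacent to many $f_4^j$ and hence have intervals buried in the block of $C_4$. But in the construction as you describe it, $v_5$ and $v_6$ are adjacent only to vertices of $C_1,C_2,C_3$, so nothing forces any of their intervals anywhere near $C_4$, and there is no reason for them to be contained in $J_3$. Trying instead to place them inside $I_1$ only yields $b+b'<\tfrac12\operatorname{len}(I_1)$ with $\operatorname{len}(I_1)>\operatorname{len}(J_3)$, which is too weak to conclude $a+a'+b+b'<\operatorname{len}(J_3)$. (Your notation also conflates the lengths of $v_4$'s and of $v_5,v_6$'s intervals into a single $a'$, which hides this issue.)

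The paper closes this hole by giving $v_5$ and $v_6$ two additional neighbours in $C_4$, namely $f_4^8$ and $f_4^9$ (and correspondingly shrinking their ranges of $f_2^j$-neighbours to seven consecutive vertices each), so that each of $v_3,v_4,v_5,v_6$ is forced to have two pairwise disjoint intervals properly contained in intervals of $C_4$'s block, yielding eight pairwise disjoint intervals inside $J_3$ and hence $a+a'+b+b'<\operatorname{len}(J_3)$. The paper also isolates and proves the structural statement you only assert, as a separate claim: a vertex adjacent to nine of the eleven vertices of $S_{11}$ in a $K_{11,4}$ (plus suitable $t$-vertices) must be represented by three pairwise disjoint intervals, two of which are properly contained in intervals of $S_{11}$-vertices; the same claim is then used to exhibit the balanced (non-disjoint) representation, which you dismiss as routine. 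Without the extra adjacencies of $v_5,v_6$ to $C_4$, the final strict inequality, and hence the whole contradiction, does not go through.
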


\begin{proof}
    We construct a graph $G$ which is balanced 3-interval but not disjoint balanced 3-interval.  
    The high-level idea of the construction is that for a particular vertex, one of its intervals is forced to a given length, while the other two are forced to be placed somewhere where there is not enough space for both of them, and thus they cannot be disjoint (note that the difference with the case $d=2$ is that now, if we stretch two of the intervals so that they do not intersect, we also have to modify the length of the third interval, and as we show here, this is not always possible). To enforce these constraints, we use the complete bipartite graph $K_{11,4}$ as a gadget and exploit the fact that any 3-interval representation of this gadget must be continuous (i.e., the union of the intervals in its underlying family is an interval)~\cite[Lemma 2]{DBLP:journals/dam/WestS84} (see also \cite[Fig. 3]{DBLP:conf/wg/GambetteV07} for the idea of its representation). 
    
    We construct $G$ as follows: we connect in a chain five $K_{11,4}$'s, to which we add six vertices $v_1$, $v_2$, $v_3$, $v_4$, $v_5$, $v_6$ (\cref{fig:bal3} shows how to link $v_1$, $v_2$, $v_3$, $v_4$ to the chain, while vertices $v_5$ and $v_6$ mimic the behavior of $v_3$ and $v_4$ with a different set of neighbors, namely, $v_5$ is connected to the corresponding vertices of the first two $K_{11,4}$'s, and $v_6$ is connected to the corresponding vertices of the second and the third $K_{11,4}$'s).
More precisely, let $C_i$, with $i\in\{1, \dots, 5\}$, be the five $K_{11,4}$'s forming the chain, enumerated from left to right. Moreover, for every $C_i$, let $f_i^j$ with $j\in \{1,\dots, 11\}$ be the eleven vertices of one side of the bipartition, and $t_i^k$, with $k\in \{1,2,3,4\}$, the four vertices of the other side of the bipartition. We assume that the chain is connected such that $f_i^{11}$ is linked to $f_{i+1}^1$. Then, $v_1$ is connected to all the vertices of $C_2$ and $C_4$, and to $f_3^{11}$ and $f_5^1$, plus another independent vertex.
Similarly, $v_2$ is connected to all the vertices of $C_2$ and $C_4$, to $v_1$ and to $f_1^{11}$ and $f_3^1$. 
On the other hand, $v_3$ is connected to $f_3^{11}$, 
$t_3^4$, $t_4^1$ and $f_4^j$ for $j\in \{1,\dots, 9\}$; 
while $v_4$ is connected to $f_5^1$, $t_4^4$, $t_5^1$ and $f_4^j$ for $j\in \{3,\dots, 11\}$. 
Finally, $v_5$ is connected to $f_1^{11}$, $t_1^3$, $t_2^1$ and $f_2^j$ for $j\in \{1,\dots, 7\}$, as well as $f_4^8$ and $f_4^9$, whereas $v_6$ is connected to $f_3^1$, $t_2^4$, $t_3^1$ and $f_2^j$ for $j\in \{5,\dots, 11\}$, as well as $f_4^8$ and $f_4^9$. The vertices $v_1$ and $v_2$ are both connected to $v_3, v_4, v_5$ and $v_6$.
    
    Now, as any 3-interval representation of a $K_{11,4}$ is continuous, any realization of $G$ groups the five $K_{11,4}$'s in a block~\cite{DBLP:journals/dam/WestS84}. 
    For $j\in \{1,2,3\}$, let $I_j$ be the intervals associated to $v_1$, $J_j$ the intervals associated to $v_2$, and $K_j$ the intervals associated to $v_3$. 
    First, it is clear that we need three different intervals to cover the neighbors of $v_1$ (and these three intervals must be disjoint). 
    Instead, the neighbors of $v_2$ could be covered only with two intervals.
    However, we will see that the two segments of the real line that need to be covered cannot have the same length (assuming that the 3-interval associated to $v_1$ is balanced).
    We will show that we need two intersecting intervals to cover the first segment.

    Suppose that only two intervals are needed to represent the adjacencies of $v_2$, and let $J$ be the interval displaying the edges between $C_2$ and $v_2$, and $J_3$ the interval displaying the edges between $C_4$ and $v_2$. Similarly, let $I_1$ be the interval associated to $v_1$ used to represent the edges with $C_2$, let $I_2$ the interval used to represent the edges with $C_4$, and $I_3$ the interval displaying the edge with the isolated vertex.
    One can easily see that $J_3$ is properly contained in $I_2$ (since $I_2$ must also intersect an interval associated to $f_3^{11}$ on its left and an interval associated to $f_5^1$ on its right), while $I_1$ is properly contained in $J$ (by an analogous argument). 
    Thus, $len(J_3) < len(I_2) =len(I_1) < len(J)$. 
    In order for the representation to be balanced, the segment of the real line covered by $J$ needs to be covered by two different intervals, say $J_1$ and $J_2$.
    To prove that $G$ is balanced 3-interval but not disjoint balanced 3-interval, we need to bound $len(J) - len(J_3)$. In particular, we need $len(J) - len(J_3) < len(J_3)$. Vertices $v_3$ and $v_4$ will allow us to find constants $a$ and $a'$ to bound $len(I_2) -len(J_3) \leq a+a'$, while vertices $v_5$ and $v_6$ will serve to find constants $b$ and $b'$ to bound $len(J)-len(I_1) \leq b+b' $. By showing that we can force the constants such that $a + a' + b + b' < len(J_3)$, we have the result. This will follow since we will have eight pairwise disjoint intervals properly contained in $J_3$: two of length $a$, two of length $a'$, two of length $b$ and two of length $b'$.
    
    Indeed, let $a$ and $a'$ be the lengths of the intervals associated to $v_3$ and to $v_4$, respectively. The next claim implies that there are two disjoint intervals associated to $v_3$ properly contained in $J_3$, and another disjoint interval that properly contains the segment between $l(I_2)$ and $l(J_3)$, and so $l(J_3)-l(I_2)< a$. 

\begin{claimrep}{\label{v3disjoint}}
Let $G$ be a graph formed by the union of a $K_{11,4}$ and a vertex $v$ which is adjacent to nine vertices in $S_{11}$, where $S_{11}$ denotes the side of the bipartition with eleven vertices. Then, vertex $v$ must be represented by three pairwise disjoint intervals, two of which are each properly contained in an interval representing a vertex of $S_{11}$.
\end{claimrep}
\begin{claimproof}
    Let $S_{11}$ and $S_4$ denote the two sides of the bipartition of the $K_{11,4}$, and let $f^j$ with $j \in \{1,\ldots, 11\}$ denote the vertices of $S_{11}$ and $t^j$ with $j \in \{1,\ldots, 4\}$ denote the vertices of $S_4$.
    Since every interval of $S_{11}$ intersects four pairwise disjoint intervals, every $f^j$ must be represented by three intervals: one that intersects two intervals $t^i, t^j$, with $i,j \in \{1,\ldots,4\}$, and two intervals contained each in one of the remaining $t^i$'s. 
    Thus, for any given $t^i$, there are at least four vertices $f^j$ such that no interval associated to them is contained in an interval of $t^i$. Since there are eleven vertices in $S_{11}$ and nine are adjacent to $v$, at least two of these nine vertices will not be contained in $t^1$. In particular, they will also not intersect the first interval of $t^1$. Thus, one of the intervals of $v$ must intersect $t^1$ and seven of the $f_4^j$
    , while the other two intervals of $v$ are contained each in one of the intervals of the remaining $f^j$'s, w.l.o.g., $f^8$ and $f^9$ (recall that the two remaining $f^j$'s fill the gaps in between intervals of the from $t^i$, and so it is possible to display these intersections without creating any new ones). 
    
    Finally, the reader can observe that this gadget admits a balanced 3-interval representation, as one can dilate the two holes that are covered by $f^8$ and $f^9$ as needed (and since no interval of these vertices is contained in an interval of $t^1$, modifying their length will have no effect on the length of the first interval of $v$).
    More precisely, consider a representation where an interval of $ t^1 $ contains intervals of $ f^1, f^2, f^3, f^4, f^5, f^6 $ and intersects $ f^7 $; an interval of $ t^4 $ contains $ f^3, f^4, f^5, f^7, f^{10}, f^{11} $ and intersects $ f^6 $; and an interval of $ v $ contains $ t_1 $. Then, $ t_2 $ and $ t_3 $ must contain an interval of $ f^8 $ and $ f^9 $, which must have length at least the length of $ v $ plus one (because another interval associated with them contains an interval of $ v $). On the other hand, $ t_2 $ also contains $ f^1, f^2 $, and $ f^6 $, while $ t_3 $ contains $ f^7, f^{10} $, and $ f^{11} $. Thus, let all $ f^i $ have length three except $ f^8 $ and $ f^9 $. We can then take $ t^1 $ and $ t^2 $ to have length $ 6(3+1)+1=25 $. Therefore, $ l(v) \geq 25 $, and it suffices to take $ l(f^8)=l(f^9)= l(v)+ 4 $. Finally, we can take $ l(t_2)=l(t_3) = 2(l(v)+5) +3(4) + 3 $. In fact, note that we can take $l(f^8)$ and $l(f^9)$ arbitrarily large, and so, in the complete construction, an interval associated to $f_4^8$ and an interval associated to $f_4^9$ can each contain intervals associated to $v_4,v_5$ and $v_6$, while still maintaining the balanced property. 
\end{claimproof}
  
    
    Similarly, the segment between $r(I_2)$ and $r(J_3)$ is also contained in an interval associated to $v_4$, which has the same properties as $v_3$ 
    and does not intersect any interval associated to $v_3$. This proves that there are two intervals of length $a$ and two intervals of length $a'$ (all pairwise disjoint) contained in $J_3$.  
    Doing the same to bound $l(I_1) -l(J)$ and $r(J)-l(I_1)$, we get the result.
    Thus, to represent $v_2$, we need two intervals associated to $v_2$ to intersect. If we do not allow intersection, the length of these two intervals will be smaller than the length of the third interval associated to $v_2$, contradicting the fact that they are balanced.

\begin{figure}
    \centering
    \usetikzlibrary{shapes.geometric}

\begin{tikzpicture}[scale=0.8]
  \foreach \x in {1,2,3,4,5} {
    \node[draw,ellipse,minimum width=1.5cm,minimum height=0.5cm] (A\x) at (\x*2,1) {};
    \foreach \y in {1,2,3,4,5,6,7,8,9,10,11} {
     \filldraw[black] (\x*2-0.7+0.11*\y,1) circle (0.5pt);
    \node (p\x\y) at (\x*2-0.7+0.11*\y,1) {};

    }
    \node[draw,ellipse,minimum width=1cm,minimum height=0.4cm] (B\x) at (\x*2,0) {};
    \foreach \y in {1,2,3,4} {
      \filldraw[black] (B\x) +(-0.4+0.15*\y,0) circle (1pt);
        \node (q\x\y) at (\x*2-0.4+0.15*\y,0) {};

    }
    
    \draw (A\x) -- (B\x);
  
 \node[circle,fill=blue,label=$v_1$, minimum size=1mm] (v1) at (8,3) {};
 \node[circle,fill=green,label=$v_2$] (v2) at (5,3) {};
 \node[circle,fill=red,label=below:{$v_3$}] (v3) at (6.5,-2) {};
 \node[circle,fill=violet,label=below:{$v_4$}] (v4) at (9.5,-2) {};
 
\filldraw[blue] (10,2) circle (2pt);

  
  }

  \draw (p111.center) -- (p21.center);
  \draw (p211.center) -- (p31.center);
  \draw (p311.center) -- (p41.center);
  \draw (p411.center) -- (p51.center);
  
  \draw [line width=0.4mm, color=green] (A2) -- (v2);
    \draw [line width=0.4mm, color=green] (B2) edge[bend right=30] (v2);
 \draw [line width=0.4mm, color = green] (A4) -- (v2);
    \draw [line width=0.4mm, color = green] (B4)  -- (v2);
    \draw [line width=0.01mm, color=green] (p111.center)  -- (v2);
    \draw [line width=0.01mm, color=green] (p31.center) -- (v2);

  \draw [line width=0.3mm, color = blue] (A4) -- (v1);
    \draw [line width=0.3mm, color = blue] (B4)  [bend right=10] to [bend right=45] (v1);
     \draw [line width=0.3mm, color=blue] (A2) -- (v1);
    \draw [line width=0.4mm, color=blue] (B2) edge[bend left=10] (v1);
    
    \draw [line width=0.01mm, color=blue] (p311.center) -- (v1);
    \draw [line width=0.01mm, color=blue] (p51.center) -- (v1);
    \draw [line width=0.01mm, color=blue] (10,2) -- (v1);
    \draw [line width=0.01mm, color=blue] (v1) -- (v2);

     \draw [line width=0.01mm, color = red] (p311.center) -- (v3);
    \draw [line width=0.01mm, color = red] (q41.center)  -- (v3);
     \draw [line width=0.01mm, color=red] (p41.center) -- (v3);
        \draw [line width=0.01mm, color = red] (p42.center) -- (v3);
     \draw [line width=0.01mm, color = red] (p43.center) -- (v3);

     \draw [line width=0.01mm, color = red] (p44.center) -- (v3);
     \draw [line width=0.01mm, color = red] (p45.center) -- (v3);
     \draw [line width=0.01mm, color = red] (p46.center) -- (v3);
     \draw [line width=0.01mm, color = red] (p47.center) -- (v3);
     \draw [line width=0.01mm, color = red] (p48.center) -- (v3);
     \draw [line width=0.01mm, color = red] (p49.center) -- (v3);

    \draw [line width=0.01mm, color=red] (3*2+0.2, 0) -- (v3);


     \draw [line width=0.01mm, color = violet] (q44.center) -- (v4);
     \draw [line width=0.01mm, color=violet] (p48.center) -- (v4);
    \draw [line width=0.01mm, color=violet] (p49.center) -- (v4);
    
      \draw [line width=0.01mm, color=violet] (q51.center) -- (v4);
      \draw [line width=0.01mm, color=violet] (p410.center) -- (v4);
      \draw [line width=0.01mm, color=violet] (p411.center) -- (v4);
      \draw [line width=0.01mm, color=violet] (p47.center) -- (v4);
      \draw [line width=0.01mm, color=violet] (p46.center) -- (v4);
      \draw [line width=0.01mm, color=violet] (p45.center) -- (v4);
      \draw [line width=0.01mm, color=violet] (p44.center) -- (v4);
      \draw [line width=0.01mm, color=violet] (p43.center) -- (v4);
      \draw [line width=0.01mm, color=violet] (p51.center) -- (v4);

      \draw[color=green] (v2) to[out=-180,in=90] (1, 1) to[out=-90,in=180] (v3);
      \draw[color=green] (v2) to[out=-180,in=90] (1, 1) to[out=-90,in=220] (v4);
      \draw[color=blue] (v1) to[out=0,in=90] (12, 0) to[out=-90,in=-70] (v3);
      \draw[color=blue] (v1) to[out=0,in=90] (12, 0) to[out=-90,in=0] (v4);
\end{tikzpicture}
    \caption{$G$ is balanced 3-interval but not disjoint balanced 3-interval. $K_{11,4}$ graphs are drawn abstractly and are chained. A thick edge stopping at the border of the ellipse means that the vertex is connected to every vertex in the corresponding part of the $K_{11,4}$. Vertices $v_5$ and $v_6$ are omitted for readability purposes.}
    \label{fig:bal3}
\end{figure}
\end{proof}

\begin{corollaryrep}
The class of disjoint balanced $d$-interval graphs is properly contained in the class of balanced $d$-interval graphs for every natural number $d \geq 3$.
\end{corollaryrep}
\begin{proof}
    The construction used in the proof of \cref{disjointbalanced} can be modified to yield a balanced $d$-interval graph by adding new isolated neighbors to $v_1$ and $v_2$ and replacing the chain of $K_{11,4}$'s by a chain of $K_{d^2+d-1, d+1}$'s, as any $d$-interval representation of this gadget must be continuous~\cite{DBLP:journals/dam/WestS84}. 
    Furthermore, we modify the adjacencies of vertices $v_3,v_4,v_5$ and $v_6$: vertex $v_3$ is now connected to $f_3^{11}$, 
$t_3^4$, $t_4^1$ and $f_4^j$ for $j\in \{1, \dots, d^2-d+1-2d+4\}$; 
while $v_4$ is connected to $f_5^1$, $t_4^4$, $t_5^1$ and $f_4^j$ for $j\in \{2d-3, \dots, d^2+d-1\}$, and the adjacencies of $v_5$ and $v_6$ are also modified accordingly. The same arguments used in \cref{disjointbalanced} prove that this constructed graph is not a disjoint balanced $d$-interval graph.
\end{proof}

\section{Concluding remarks}\label{SectionConclusion}
We have shown that the natural generalization of Roberts characterization for unit interval graphs remains valid for the most general definition of $d$-interval graphs that are interval graphs. 
However, quite surprisingly, if we require the $d$ intervals to be disjoint, then the result does not hold anymore. It remains as an open question whether disjoint $d$-interval graphs that are also interval can be characterized in some other way, or simply if they can be recognized in polynomial time. 
Finally, we have obtained a relatively complete landscape of the containment relationships between different subclasses of 2-interval graphs, that cannot be fully generalized for $d > 2$. In particular, for $d>2$, it is still unknown whether the class of unit $d$-interval graphs is contained in the class of disjoint balanced $d$-interval graphs. 


\bibliographystyle{plainurl}
\bibliography{bib}
\end{document}